\algrenewcommand\algorithmicrequire{\textbf{Input:}}
\algrenewcommand\algorithmicensure{\textbf{Output:}}
\numberwithin{equation}{section}
\theoremstyle{plain}
\newtheorem{thm}{Theorem}[section]
\newtheorem{defin}[thm]{Definition}
\newtheorem{lem}[thm]{Lemma}
\newtheorem{rem}[thm]{Remark}
\titleformat*{\section}{\bfseries\Large\rmfamily}
\titleformat*{\subsection}{\bfseries\large\rmfamily}
\titleformat*{\subsubsection}{\bfseries\normalsize\rmfamily}
\begin{document}
\begin{frontmatter}
\title{\Large
An improved spectral clustering method
for mixed membership community detection}
\runtitle{Mixed-ISC for mixed membership community detection}

\begin{aug}
\author{\fnms{Huan} \snm{Qing}
\thanksref{ m1}
\ead[label=e1]{qinghuan@cumt.edu.cn}}
\and
\author{\fnms{Jingli} \snm{Wang}
\thanksref{ m2}
\ead[label=e2]{jlwang@nankai.edu.cn}}

\runauthor{H.Qing and J.Wang}

\affiliation{China University of Mining and Technology\thanksmark{m1}
\and
Nankai University\thanksmark{m2}
}
\today
\address[A]{
H.Qing\\
Department of Mathematics\\
China University of Mining and Technology\\
Xuzhou, 221116\\
 P.R. China\\
\printead{e1}\\
\phantom{E-mail:\ }}

\address[B]{
J.Wang\\
School of Statistics and Data Science\\
Nankai University\\
Tianjin, 300071\\
 P.R. China\\
\printead{e2}\\
\phantom{E-mail:\ }}
\end{aug}
\begin{abstract}


\begin{center}
\textbf{Abstract}
\end{center}
 Community detection has been well studied recent years, but the more realistic case of
mixed membership community detection remains a challenge. Here, we develop an efficient spectral algorithm Mixed-ISC based on applying more than $K$ eigenvectors for clustering given $K$ communities for estimating the community memberships under the degree-corrected mixed membership (DCMM) model. We show that the algorithm is asymptotically consistent. Numerical experiments on both simulated networks and many empirical networks demonstrate that Mixed-ISC performs well compared to a number of benchmark methods for mixed membership community detection. Especially, Mixed-ISC provides satisfactory performances on weak signal networks.
\end{abstract}
\begin{keyword}
\kwd{Degree-corrected mixed membership (DCMM) model, mixed membership network, weak signal network, SNAP ego-networks}
\end{keyword}
\end{frontmatter}
\section{Introduction}
In the study of networks, including physics, computer science, statistics, and the social science, the problem of (mixed membership) community detection has received substantial attentions, see \cite{anandkumar2014tensor,Cai2016survey,Fortunato2016Community, goldenberg2010a,SBM,SCORE, SLIM,Tutorial, GeoNMF,RSC, OCCAM}. Consider an undirected, unweighted, no-self-loop network $\mathcal{N}$, and let $A$ be its adjacency matrix such that $A_{ij}=1$ if there is an edge between node $i$ and $j$, $A_{ij}=0$ otherwise. Since there is no-self-loop in $\mathcal{N}$, all diagonal entries of $A$ are zero. In the perspective of spectral clustering, the number of communities $K$ can be determined empirically by the distinguished leading eigenvalues of the adjacency matrix $A$ or its variants.

In community detection area, there are two classical real networks, Simmons and Caltech \cite{traud2011comparing}. The Simmons network contains 1137 nodes in which nodes are students with graduation year between 2006 and 2009. It is deemed as four friendship communities such that students in the same year are in the same community, which is applied as the ground truth. The Caltech network contains a largest component with 590 nodes and the community structure is highly correlated with the eight dorms since students in the same dorm are more likely to be friends. Therefore, the Caltech network contains eight communities.
 However, a number of benchmark community detection methods detect the communities of this two networks with large error rates.  For example, the error rates of SCORE \cite{SCORE}, RSC \cite{RSC}, CMM \cite{CMM} and OCCAM \cite{OCCAM} are 268/1137, 244/1137, 137/1137, 266/1137 for Simmons, and 180/590, 170/590, 106/590, 189/590 for Caltech, respectively. Encouragingly, Jin et al. \cite{SCORE+} found that the $(K+1)$-th leading eigenvalues of the adjacency matrices for this two networks are close to their $K$-th eignvalues and they also contain label information. They called this kind of networks as ``weak signal networks". Therefore we give the following definition:
\begin{defin}\label{weakA}
	Given an un-weighted, undirected, no-self-loops and connected network $\mathcal{N}$ with adjacency matrix $A$ and the number of communities $K$. The network is a weak signal network if $1-|\frac{\lambda^{A}_{K+1}}{\lambda^{A}_{K}}|\leq 0.1$, otherwise it is strong signal network, where $\lambda^{A}_{k}$ denotes the $k$-th leading eigenvalue of $A, k=K, K+1$.
\end{defin}
It needs to note that we measure the ``signal weakness'' slightly different as that in \cite{SCORE+}, where we use $1-|\frac{\hat{\lambda}_{K+1}}{\hat{\lambda}_{K}}|$ instead of the $1-\frac{\hat{\lambda}_{K+1}}{\hat{\lambda}_{K}}$ in \cite{SCORE+} since we find that the leading eigenvalues are measured by magnitude, which means that $\hat{\lambda}_{K+1}$ may have different sign as that of $\hat{\lambda}_{K}$. When they have different signs but their absolute values are close, the $(K+1)$-th eigenvector still has label information.
By considering $(K+1)$ eignvalues, Jin et al \cite{SCORE+} extended the classical method SCORE \cite{SCORE} for weak signal networks, and their method is known as SCORE+. SCORE+ has pleasure detection results since the error rates are 127/1137 for Simmons and 98/590 for Caltech. Qing and Wang \cite{ISC} inspired by this idea and constructed the Improved Spectral Clustering (ISC) method with full theoretical guarantee  which also has very low error rates, with 121/1137 for Simmons and 96/590 for Caltech.

When we study the mixed membership networks, we find that there are a large number of networks which are weak signal networks, such as the Stanford Network Analysis Project (SNAP) ego-networks \cite{SNAPego,OCCAM} and Coauthorship networks \cite{ji2016coauthorship}.
The SNAP ego-networks dataset contains substantial ego-networks from Facebook, Twitter, and GooglePlus. In an ego-network, all the nodes are friends of one central user and the friendship groups or circles (depending on the platform) set by the users can be used as ground truth communities. Meanwhile, nodes in SNAP ego-networks may have overlapping memberships (i.e., the true overlapping memberships are known for SNAP ego-networks.). We obtain the SNAP ego-networks parsed by Yuan Zhang (the first author of \cite{OCCAM}). The parsed SNAP ego-networks are slightly different from those used in Zhang et al. (2020). We find that there exist many weak signal networks in SNAP ego-networks, see Table \ref{SNAPweak}. Therefore, it is necessary and important to design method that can deal with weak signal networks for mixed membership community detection.
\begin{table}[h!]
	\centering
	\caption{Numbers of weak signal and strong signal networks in SNAP ego-networks.}
	\label{SNAPweak}
	\begin{tabular}{cccccccccc}
		\toprule
		&\#Networks&\# Weak Signal networks&\# Strong Signal networks\\
		\midrule
		Facebook&7&0&7\\
		\hline
		GooglePlus &58&13&45\\
		\hline
		Twitter&255&99&156\\
		\bottomrule
	\end{tabular}
\end{table}

Ji and Jin \cite{ji2016coauthorship} collected a coauthorship network data set for statisticians, based on all published papers in AOS, Biometrika, JASA, JRSS-B, from 2003 to the first half of 2012. A large number of researches have studied this dataset \cite{li2020hierarchical,karwa2016discussion}. In this network, an edge is constructed between two authors if they have coauthored at least two papers in the range of the data set. As suggested by \cite{mixedSCORE}, there are two communities called ``Carroll-Hall'' and ``North Carolina'' over 236 nodes (i.e., $n=236, K=2$ for Coauthorship network), and authors in this network have mixed memberships in the two communities. We find that the Coauthorship network is a weak signal network since $1-|\frac{\lambda^{A}_{3}}{\lambda^{A}_{2}}|=0.0787$ where $A$ is the adjacency matrix of the Coauthorship network. Based on this observation, we argue that method which can deal with weak signal networks for mixed membership community detection may provide some new insights on the analysis of the Coauthorship network.


To our knowledge, the study of such weak signal network with mixed membership had not been reported.
Inspired by the ISC method \cite{ISC}, which can successfully detect weak signal networks as well as strong signal networks, we extend the ISC method to mixed membership community detection, and name the new approach as Mixed-ISC. Mixed-ISC is different from ISC in important ways:
\begin{itemize}
  \item Models are different. ISC is for non-mixing models and it is designed based on the degree-corrected stochastic block model (DCSBM) \cite{DCSBM}, while Mixed-ISC is for mixing models and designed based on the degree-corrected mixed membership (DCMM) model \cite{mixedSCORE}.
  \item Problems are different. ISC is for the community detection problem while Mixed-ISC is a mixed membership community detection method.
  \item Methodologies are different. There are a cluster center hunting (CCH) step and a membership reconstruction (MR) step in Mixed-ISC while there is no such steps in ISC.
\end{itemize}

Overall, Mixed-ISC can apply more than $K$ eigenvectors and eigenvalues for reconstructing nodes memberships, while traditional mixed membership community detection methods such as Mixed-SCORE \cite{mixedSCORE}, OCCAM \cite{OCCAM} and GeoNMF \cite{GeoNMF}, can not. We show that our method is asymptotically consistent and it performs satisfactory compared to a number of benchmark methods for a small scale of simulations and substantial empirical datasets. 

This paper is organized as follows. In Section \ref{sec2}, we set up the mixed membership community detection problem under DCMM. In Section \ref{sec4}, we propose our Mixed-ISC method. Section \ref{sec5} presents theoretical framework for Mixed-ISC. Sections \ref{sec6} and \ref{sec7} investigates the performances of Mied-ISC methods via comparing with three benchmark methods on simulations and substantial empirical datasets. We also study the choice of tuning parameter in Section \ref{sec7}. Section \ref{sec8} concludes.
\section{Settings and Model}\label{sec2}

The following notations will be used throughout the paper: $\|\cdot\|_{F}$ for a matrix denotes the Frobenius norm,  $\|\cdot\|$ for a matrix denotes the spectral norm,  and $\|\cdot\|$ for a vector denotes the $l_{2}$-norm. For convenience, when we say ``leading eigenvalues'' or ``leading eigenvectors'', we are comparing the \emph{magnitudes} of the eigenvalues and their respective eigenvectors with unit-norm. For any matrix or vector $x$, $x'$ denotes the transpose of $x$. For any matrix $X$, set the matrix $\mathrm{max}(X,0)$ such that its $(i,j)$ entry is $\mathrm{max}(X_{ij},0)$.

Consider an undirected, unweighted, no-self-loop network $\mathcal{N}$ and assume that there are $K$ disjoint blocks $\mathcal{C}^{(1)}, \mathcal{C}^{(2)}, \ldots, \mathcal{C}^{(K)}$ where $K$ is assumed to be known. 
In this paper, we use the degree-corrected mixed membership (DCMM) model \cite{mixedSCORE} to construct the mixed membership network. DCMM assumes that for each node $i$, there is a Probability Mass Function (PMF) $\pi_{i}=(\pi_{i}(1), \pi_{i}(2), \ldots, \pi_{i}(K))$ such that
\begin{align*}
	\mathrm{Pr}(i\in \mathcal{C}^{(k)})=\pi_{i}(k), \qquad 1\leq k\leq K, 1\leq i\leq n,
\end{align*}
which means that node $i$ belongs to cluster $\mathcal{C}^{(k)}$ with probability $\pi_{i}(k)$ \footnote{Here, $\pi_{i}(k)$ can also be understood as the weight of node $i$ belongs to community $k$.}. Hence we have $\sum_{k=1}^{K}\pi_{i}(k)=1$. In this sense, the mixed membership model allows us to measure the probability that each node belongs to a certain community. If a node only belongs to one community, we call this node $i$ as ``pure", and in this case $\pi$ is degenerate such that only one element of $\pi$ is 1, and all others $K-1$ entries are 0; and if a node shares among communities, it is called ``mixed". Furthermore, set $\underset{1\leq k\leq K}{\mathrm{max}}\pi_{i}(k)$ as the \textit{purity} of node $i$, for $1\leq i\leq n$.

For mixed membership community detection, the aim is to  estimate $\pi_{i}$ for all nodes $i\in\{1,2,\ldots, n\}$. First of all, we should model the adjacency matrix. To do so, we first define a  $K\times K$ symmetric non-negative matrix $P$ (called \textit{mixing matrix} in this paper) which is non-singular, irreducible, and $P(i,j)\in [0,1] \mathrm{~for~}1\leq i,j\leq K$. We also model the \textit{degree heterogeneity} by a positive vector $\theta=(\theta(1), \ldots, \theta(n))'$. For any fixed pair of  $(i,j)$, DCMM has the following assumption:
\begin{align*}
\mathrm{Pr}(A(i,j)=1|i\in V^{(k)}, j\in V^{(l)})=\theta(i)\theta(j)P(k,l),
\end{align*}
which means that under the condition that we know the information $i\in V^{(k)}, j\in V^{(l)}$, the probability that there is an edge between nodes $i,j$ is $\theta(i)\theta(j)P(k,l)$. For $1\leq i<j\leq n$, $A(i,j)$ are Bernoulli random variables that are independent of each other, satisfying
\begin{align*}
\mathrm{Pr}(A(i,j)=1)=\theta(i)\theta(j)\sum_{k=1}^{K}\sum_{l=1}^{K}\pi_{i}(k)\pi_{j}(l)P(k,l).
\end{align*}

 For convenience, introduce an $n\times K$ \textit{membership matrix} $\Pi$ such that the $i$-th row of $\Pi$ (denoted as $\Pi_{i}$) is $\pi_{i}$ for all $i\in \{1,2,\ldots, n\}$, and denote $\Theta$ as an $n\times n$ matrix whose $i$-th diagonal entry is $\theta_{i}$. As presented in \cite{mixedSCORE}, let $E[A]=\Omega$ such that $\Omega(i,j)=\mathrm{Pr}(A(i,j)=1), 1\leq i<j\leq n$. Then we have
\begin{align*}
	\Omega=\Theta \Pi P \Pi' \Theta.
\end{align*}

Hereafter, given $(n, P, \Theta, \Pi)$, we can generate the random adjacency matrix $A$ under DCMM, hence we denote the DCMM model as $DCMM(n, P, \Theta, \Pi)$ for convenience in this paper. In the case that when all nodes are pure, then DCMM reduces to the degree-corrected stochastic block model (DCSBM) \cite{DCSBM}. In the case where $\theta(i)=\theta(j)=c_{0}$ (a positive constant) for all modes $i,j\in\{1,2,\ldots, n\}$, then DCMM degenerates as MMSB \cite{MMSB}. Moreover, the identifiability of the DCMM model is studied in \cite{jin2017a, mixedSCORE}, hence the model is well defined.

In this paper, the primary goal is to estimate the membership matrix $\Pi$ with given $(A, K)$ assuming that $A$ is generated from the DCMM model. Finally, $\Theta$ is nuisance that we want to remove or reduce  the effects of it in a mixed membership community detection.

\section{Methodology}\label{sec4}
In this section we present the following algorithm which we call Mixed Improved Spectral Clustering (Mixed-ISC for short) method. 


With known $A, K$, and a ridge regularizer $\tau\geq 0$, we first use a normalization procedure, i.e., ISC step, to reduce the noise.

\vspace{0.4cm}
$\bullet$ \texttt{ISC step}:

1. Obtain the graph Laplacian with ridge regularization by
\begin{align*}
  L_{\tau}=D_{\tau}^{-1/2}AD_{\tau}^{-1/2},
\end{align*}
where $D_{\tau}=D+\tau I$, $D$ is an $n\times n$ diagonal matrix whose  $i$-th diagonal entry is $D(i,i)=\sum_{j=1}^{n}A(i,j)$,  $\tau= cd, d=\frac{d_{\mathrm{max}}+d_{\mathrm{min}}}{2}, d_{\mathrm{max}}=\mathrm{max}_{1\leq i\leq n}d_{i}, d_{\mathrm{min}}=\mathrm{min}_{1\leq i\leq n}d_{i}$ (where $d_{i}$ is the degree of node $i$, i.e., $d_{i}=\sum_{j=1}^{n}A_{ij}$ for $1\leq i\leq n$), $c$ is a nonnegative constant, and $I$ is the $n\times n$ identity matrix. Conventional choice for $c$ is 0.1.

2. Compute the leading $\textbf{K+1}$ eigenvalues and eigenvectors with unit norm of $L_{\tau}$, and then calculate  the weighted eigenvectors matrix:
\begin{align*}
  \hat{X}=[\hat{\eta}_{1},\hat{\eta}_{2}, \ldots, \hat{\eta}_{K}, \hat{\eta}_{K+1}]\cdot \mathrm{diag}(\hat{\lambda}_{1}, \hat{\lambda}_{2}, \ldots, \hat{\lambda}_{K}, \hat{\lambda}_{K+1}),
\end{align*}
where $\hat{\lambda}_{i}$ is the $i$-th leading eigenvalue of $L_{\tau}$, and $\hat{\eta}_{i}$ is the respective eigenvector with unit-norm, for $i = 1, 2, \cdots, K+1$.

3. Normalizing each row of $\hat{X}$  to have unit length, and denote by $\hat{X}^{*}$, i.e.,
$$\hat{X}^{*}_{ij}=\hat{X}_{ij}/(\sum_{j=1}^{K+1}\hat{X}_{ij}^{2})^{1/2}, i = 1, \dots, n, j=1, \dots, K+1.$$

After the ISC step, we obtain a normalized matrix $\hat{X}^*$, then we estimate communities centers by K-median algorithm, finally, we  reconstruct the membership with the estimated $K$ centers.

$\bullet$ \texttt{Cluster Centers Hunting (CCH) step}:

4. Perform K-median algorithm on the rows of $\hat{X}^{*}$ and obtain $K$ estimated cluster centers $\hat{v}_{1}, \hat{v}_{2}, \ldots, \hat{v}_{K} \in\mathcal{R}^{1\times (K+1)}$, i.e.,
\begin{align}
\{\hat{v}_{1}, \hat{v}_{2}, \ldots, \hat{v}_{K}\}=\mathrm{arg~}\underset{\hat{u}_{1},  \ldots, \hat{u}_{K}}{\mathrm{min}}\frac{1}{n}\sum_{i=1}^{n}\underset{\hat{u}\in\{\hat{u}_{1},\ldots, \hat{u}_{K}\}}{\mathrm{min}}\|\hat{X}^{*}_{i}-\hat{u}\|_{2}.
\end{align}
Form the $K\times (K+1)$ matrix $\hat{V}$ such that the $i$-th row of $\hat{V}$ is $\hat{v}_{i}, 1\leq i\leq K$.

$\bullet$ \texttt{Membership Reconstruction (MR) step}:

5. Project the rows of $\hat{X}^{*}$ onto the spans of $\hat{v}_{1}, \ldots, \hat{v}_{K}$, i.e., compute the $n\times K$ matrix $\hat{Y}$ such that $\hat{Y}=\hat{X}^{*}\hat{V}'(\hat{V}\hat{V}')^{-1}$. Set $\hat{Y}=\mathrm{max}(0, \hat{Y})$ and estimate $\pi_{i}$ by $\hat{\pi}_{i}=\hat{Y}_{i}/\|\hat{Y}_{i}\|_{1}, 1\leq i\leq n$.  Obtain the estimated membership matrix $\hat{\Pi}$ such that its $i$-th row is $\hat{\pi}_{i}, 1\leq i\leq n$.
\vspace{0.4cm}

Several remarks about Mixed-ISC method are listed in order.
\begin{itemize}
  \item Similar as \cite{OCCAM}, we also apply K-median clustering method to hunt the $K$ centers of  $\hat{X}^{*}$. As is known, K-means method is usually applied in spectral clustering community detection algorithms \cite{javed2018community}. However, in mixed membership community detection problems, K-means may fail to find the mixture cluster centers, while K-median can ignore some marginal nodes and have a chance to detect correct centers. With K-median spectral clustering, we can still get consistent estimation \cite{lei2015consistency,OCCAM}. However, by numerical studies, we found that K-median clustering is much slower than K-means, but they have same results in almost all scenarios. Therefore, for time-saving, we still use K-means clustering method in simulation studies and real data analysis in section \ref{sec6} and \ref{sec7}.
  \item In the MR step, we compute $\hat{Y}$ by setting it as $\hat{X}^{*}\hat{V}'(\hat{V}\hat{V}')^{-1}$, which is quite different from that of Mixd-SCORE \cite{mixedSCORE} and OCCAM \cite{OCCAM}, and this is a key point that why our Mixed-ISC can detect weak signal networks.
  \item In the MR step, for the feasibility of theoretical analysis for Mixed-ISC, we compute $\hat{\pi}_{i}$ by applying $\hat{Y}_{i}$ dividing its $l_{2}$-norm in the theoretical analysis part, since $l_{2}$-norm is easier to be analysized than $l_{1}$-norm for mixed membership community detection problem, just as that in \cite{OCCAM}.
\end{itemize}

\section{Main results}\label{sec5}
In this section, we show asymptotic consistency of Mixed-ISC for fitting the DCMM. Let $\mathscr{D}_{\tau}=\mathscr{D}+\tau I$, where $\mathscr{D}$ is an $n\times n$ diagonal matrix whose $i$-th diagonal entry is $\mathscr{D}(i,i)=\sum_{j=1}^{n}\Omega(i,j)$. The population Laplacian matrix with regularization is defined as
\begin{align*}
\mathscr{L}_{\tau}=\mathscr{D}^{-1/2}_{\tau}\Omega\mathscr{D}^{-1/2}_{\tau}.
\end{align*}
Let $\theta_{\mathrm{max}}=\mathrm{max}_{1\leq i\leq n}\{\theta(i)\}, \theta_{\mathrm{min}}=\mathrm{min}_{1\leq i\leq n}\{\theta(i)\}$, and let $\mathcal{N}_{k}=\{1\leq i\leq n: \pi_{i}(k)=1\}$ be the set of pure nodes of community $k, 1\leq k\leq K$. First, we assume that there are three constants $c_{1}, c_{2}\in (0,1)$ and $c_{3}>0$ such that
\begin{align}\label{a1}
\underset{1\leq k\leq K}{\mathrm{min}}|\mathcal{N}_{k}|\geq c_{1}n,~~\underset{1\leq k\leq K}{\mathrm{min}}\sum_{i\in \mathcal{N}_{k}}\theta^{2}(i)\geq c_{2}\|\theta\|^{2},~~\theta_{\mathrm{max}}\leq c_{3}.
\end{align}
Second, assume that as $n\rightarrow \infty$,
\begin{align}\label{a2}
\frac{\mathrm{log}^{2}(n)\sqrt{\theta_{\mathrm{max}}\|\theta\|_{1}}}{\theta_{\mathrm{min}}\|\theta\|\sqrt{n}}\rightarrow 0.
\end{align}
For convenience, set $err_{n}$ as
\begin{align*}
err_{n}=&(\frac{\Delta_{\mathrm{max}}}{\sqrt{\tau+\Delta_{\mathrm{max}}}}+\frac{\delta_{\mathrm{max}}}{\sqrt{\tau+\delta_{\mathrm{max}}}})\frac{\mathrm{max}\{1-\sqrt{\frac{\tau+\Delta_{\mathrm{min}}}{\tau+\delta_{\mathrm{max}}}}, \sqrt{\frac{\tau+\Delta_{\mathrm{max}}}{\tau+\delta_{\mathrm{min}}}}-1\}}{\sqrt{\tau+\Delta_{\mathrm{min}}}}\\
&+\frac{C\sqrt{\mathrm{log}(n)\theta_{\mathrm{max}}\|\theta\|_{1}}}{\sqrt{(\tau+\Delta_{\mathrm{min}})(\tau+\delta_{\mathrm{min}})}},
\end{align*}
where $\Delta_{\mathrm{max}}=d_{\mathrm{max}},\Delta_{\mathrm{min}}=d_{\mathrm{min}}, \delta_{\mathrm{max}}=\underset{1\leq i\leq n}{\mathrm{max}}\mathscr{D}(i,i), \delta_{\mathrm{min}}=\underset{1\leq i\leq n}{\mathrm{min}}\mathscr{D}(i,i)$. Next lemma bounds $\|L_{\tau}-\mathscr{L}_{\tau}\|$.
\begin{lem}\label{boundL}
Under $DCMM(n, P, \Theta, \Pi)$, if assumptions (\ref{a1})-( \ref{a2}) hold, with probability at least $1-o(n^{-3})$, we have
\begin{align*}
\|L_{\tau}-\mathscr{L}_{\tau}\|\leq err_{n}.
\end{align*}
\end{lem}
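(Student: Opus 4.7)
The plan is the standard decomposition-and-concentration strategy for regularized Laplacians, separating the noise coming from $A-\Omega$ from the bias between $D_\tau$ and $\mathscr{D}_\tau$, then combining everything by a triangle inequality. Adding and subtracting $\mathscr{D}_\tau^{-1/2}AD_\tau^{-1/2}$ and $\mathscr{D}_\tau^{-1/2}\Omega D_\tau^{-1/2}$ inside $L_\tau-\mathscr{L}_\tau$ gives the three-term identity
\begin{align*}
L_\tau-\mathscr{L}_\tau
= (D_\tau^{-1/2}-\mathscr{D}_\tau^{-1/2})AD_\tau^{-1/2}
+ \mathscr{D}_\tau^{-1/2}(A-\Omega)D_\tau^{-1/2}
+ \mathscr{D}_\tau^{-1/2}\Omega(D_\tau^{-1/2}-\mathscr{D}_\tau^{-1/2}),
\end{align*}
and I would bound each summand in operator norm so that their sum reproduces $err_n$.

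For the two bias terms, $D_\tau^{-1/2}-\mathscr{D}_\tau^{-1/2}$ is diagonal, so its operator norm equals $\max_i|(\tau+d_i)^{-1/2}-(\tau+\mathscr{D}(i,i))^{-1/2}|$. Factoring out $(\tau+d_i)^{-1/2}$ and noting that $(\tau+d_i)/(\tau+\mathscr{D}(i,i))$ lies between $(\tau+\Delta_{\min})/(\tau+\delta_{\max})$ and $(\tau+\Delta_{\max})/(\tau+\delta_{\min})$ yields $\|D_\tau^{-1/2}-\mathscr{D}_\tau^{-1/2}\|\le M/\sqrt{\tau+\Delta_{\min}}$, with $M$ precisely the $\max\{\cdot,\cdot\}$ expression appearing in $err_n$. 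The flanking factors $\|AD_\tau^{-1/2}\|$ and $\|\mathscr{D}_\tau^{-1/2}\Omega\|$ I would handle by the factorization $A=D^{1/2}LD^{1/2}$ with $L=D^{-1/2}AD^{-1/2}$ satisfying $\|L\|\le 1$ (and its population analogue), together with the monotonicity of $x\mapsto x/(\tau+x)$. This delivers $\|AD_\tau^{-1/2}\|\le \Delta_{\max}/\sqrt{\tau+\Delta_{\max}}$ and $\|\mathscr{D}_\tau^{-1/2}\Omega\|\le \delta_{\max}/\sqrt{\tau+\delta_{\max}}$, so that the first and third summands above combine to produce the first composite term of $err_n$.

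For the noise term, $\|\mathscr{D}_\tau^{-1/2}(A-\Omega)D_\tau^{-1/2}\|\le \|A-\Omega\|/\sqrt{(\tau+\delta_{\min})(\tau+\Delta_{\min})}$, so the whole matter reduces to showing $\|A-\Omega\|\le C\sqrt{\log(n)\theta_{\max}\|\theta\|_1}$ with probability at least $1-o(n^{-3})$. I would invoke a matrix Bernstein (or Chung--Radcliffe--Tropp) inequality for the symmetric random matrix $A-\Omega$: its entries are independent centered Bernoullis bounded by $1$ and the row-wise variance proxy $\max_i\sum_j\mathrm{Var}(A_{ij})$ is $O(\theta_{\max}\|\theta\|_1)$ since $\mathrm{Var}(A_{ij})\le\Omega_{ij}\le\theta(i)\theta(j)\max_{k,l}P(k,l)$. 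Assumption (\ref{a2}) is exactly what ensures that the Bernstein variance term dominates the bounded-range term, so the tail collapses to the advertised $\sqrt{\log n}$-rate with a universal constant. This concentration step is the main technical obstacle: the sharp log-power together with the $o(n^{-3})$ failure probability must be extracted carefully, since a loose Bernstein application would introduce stray $\mathrm{poly}(\log n)$ factors or degrade the exponent. Once the three operator-norm bounds hold simultaneously on a common $1-o(n^{-3})$ event, the triangle inequality gives $\|L_\tau-\mathscr{L}_\tau\|\le err_n$.
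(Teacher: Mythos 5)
Your proposal matches the paper's proof in all essentials: the same three-term telescoping of $L_\tau-\mathscr{L}_\tau$ (yours is just the mirror-image arrangement), the same diagonal-perturbation bound $\|D_\tau^{-1/2}-\mathscr{D}_\tau^{-1/2}\|\le \max\{\cdot,\cdot\}/\sqrt{\tau+\Delta_{\min}}$, the same factorizations through $\|D^{-1/2}AD^{-1/2}\|\le 1$ and $\|\mathscr{D}^{-1/2}\Omega\mathscr{D}^{-1/2}\|\le 1$ for the flanking factors, and the same reduction of the noise term to $\|A-\Omega\|\le C\sqrt{\log(n)\,\theta_{\max}\|\theta\|_1}$. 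The only divergence is that you sketch a matrix-Bernstein derivation of that concentration bound, whereas the paper imports it directly as Lemma 3.2 of the Mixed-SCORE reference (which is precisely why assumptions (\ref{a1})--(\ref{a2}) are stated as they are); your sketch is consistent with that result, so no gap.
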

By basic algebra, we have the rank of $\mathscr{L}_{\tau}$ is $K$, hence $\mathscr{L}_{\tau}$ has $K$ nonzero eigenvalues, let $\{\lambda_{i},\eta_{i}\}_{i=1}^{K}$ be such leading $K$ eigenvalues and their respective eigenvectors with unit-norm. Define the $n\times (K+1)$ matrix $X$ as a population version of $\hat{X}$ as $X=[\eta_{1},\eta_{2},\ldots, \eta_{K}, \textbf{0}]\cdot\mathrm{diag}(\lambda_{1},\lambda_{2},\ldots,\lambda_{K},0)$,
where $\textbf{0}$ is an $n\times 1$ vector with elements being zeros. Let $X^{*}$ be the row-normalization version of $X$, i.e., $X^{*}$ is obtained be normalizing each rows of $X$ to have unit-length. To obtain the bound of $\|\hat{X}^{*}-X^{*}\|_{F}$, the eigenvalues of $L_{\tau}$ and $\mathscr{L}_{\tau}$ should satisfy the following assumption: For $K+1\leq i\leq n$, we have
\begin{align}\label{a3}
\hat{\lambda}_{1}\geq \ldots \geq \hat{\lambda}_{K}>0, \lambda_{1}\geq \ldots \geq \lambda_{K}>0, \mathrm{and~} \hat{\lambda}_{K}>|\hat{\lambda}_{i}|.
\end{align}
Lemma \ref{boundXstar} provides the bound of $\|\hat{X}^{*}-X^{*}\|_{F}$, which is the corner stone to characterize the behavior of our Mixed-ISC approach.
\begin{lem}\label{boundXstar}
Under $DCMM(n, P, \Theta, \Pi)$, set $m=\mathrm{min}_{i}\{\mathrm{min}\{\|\hat{X}_{i}\|, \|X_{i}\|\}\}$ as the length of the shortest row in $\hat{X}$ and $X$. If assumptions (\ref{a1})-( \ref{a3}) hold, with probability at least $1-o(n^{-3})$, we have
\begin{align*}
\|\hat{X}^{*}-X^{*}\|_{F}\leq \frac{1}{m}(\sqrt{K+1}+\frac{K\delta_{\mathrm{max}}}{\lambda_{K}(\tau+\delta_{\mathrm{max}})})err_{n}.
\end{align*}
\end{lem}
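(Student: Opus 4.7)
The plan is to use Lemma \ref{boundL} as the primary input, combine it with Weyl's inequality and the Davis--Kahan $\sin\Theta$ theorem to bound the difference of the weighted eigenvector matrices $\hat X$ and $X$ up to an orthogonal alignment, and then pass to the row-normalized versions via a Lipschitz estimate for the map $x\mapsto x/\|x\|$.

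First, Lemma \ref{boundL} gives $\|L_\tau-\mathscr{L}_\tau\|\leq\mathrm{err}_n$ with probability at least $1-o(n^{-3})$. Weyl's inequality then yields $|\hat\lambda_i-\lambda_i|\leq\mathrm{err}_n$ for every $i$; since $\mathscr{L}_\tau=\mathscr{D}_\tau^{-1/2}\Omega\mathscr{D}_\tau^{-1/2}$ inherits rank $K$ from $\Omega=\Theta\Pi P\Pi'\Theta$, we have $\lambda_{K+1}=0$ and hence $|\hat\lambda_{K+1}|\leq\mathrm{err}_n$. The eigengap hypothesis (\ref{a3}), combined with the Davis--Kahan theorem applied at the gap between $\lambda_K$ and $\lambda_{K+1}=0$, supplies a $K\times K$ orthogonal matrix $O$ such that
\[
\|[\hat\eta_1,\ldots,\hat\eta_K]-[\eta_1,\ldots,\eta_K]O\|_F\leq\frac{\sqrt{2K}\,\mathrm{err}_n}{\lambda_K-|\hat\lambda_{K+1}|}.
\]

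Next, setting $\widetilde O=\mathrm{diag}(O,1)$, I would bound $\|\hat X-X\widetilde O\|_F$ column by column. The $(K+1)$-th column of $\hat X$ equals $\hat\lambda_{K+1}\hat\eta_{K+1}$, of norm at most $\mathrm{err}_n$; for the leading $K$ columns, writing $\hat U_K=[\hat\eta_1,\ldots,\hat\eta_K]$, $U_K=[\eta_1,\ldots,\eta_K]$, $\hat\Lambda_K=\mathrm{diag}(\hat\lambda_1,\ldots,\hat\lambda_K)$ and $\Lambda_K=\mathrm{diag}(\lambda_1,\ldots,\lambda_K)$, I would use the telescope
\[
\hat U_K\hat\Lambda_K-U_K\Lambda_KO=(\hat U_K-U_KO)\hat\Lambda_K+U_KO(\hat\Lambda_K-\Lambda_K)+U_K(O\Lambda_K-\Lambda_KO)
\]
and estimate the three pieces respectively by the Davis--Kahan bound times $\|\hat\Lambda_K\|$, the Weyl bound $\sqrt K\,\mathrm{err}_n$, and a commutator estimate again controlled by the Davis--Kahan factor. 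The structural inequality $\lambda_1(\mathscr{L}_\tau)\leq\delta_{\max}/(\tau+\delta_{\max})$, which follows from the normalized form of $\mathscr{L}_\tau$, yields $\|\hat\Lambda_K\|\leq\delta_{\max}/(\tau+\delta_{\max})+\mathrm{err}_n$ and produces the factor $K\delta_{\max}/(\lambda_K(\tau+\delta_{\max}))$, while the $\sqrt{K+1}$ term arises from Pythagorean aggregation of the $K$ Weyl contributions with the $(K+1)$-th column.

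Finally, I would transfer the bound to the row-normalized versions by the pointwise Lipschitz estimate for unit normalization: for any nonzero $u,v$ with $\min(\|u\|,\|v\|)\geq m$ one has $\bigl\|u/\|u\|-v/\|v\|\bigr\|\leq 2\|u-v\|/m$. Applying this row-by-row and summing in $\ell_2$ yields $\|\hat X^*-(X\widetilde O)^*\|_F\leq(2/m)\|\hat X-X\widetilde O\|_F$; since row normalization commutes with right-multiplication by an orthogonal matrix, $(X\widetilde O)^*=X^*\widetilde O$, and the statement of the lemma is understood with the rotation absorbed into the alignment (it drops out in subsequent analysis because the Frobenius norm is orthogonally invariant). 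The hard part will be the commutator $O\Lambda_K-\Lambda_KO$ in the middle step: because Davis--Kahan only aligns eigenspaces as a whole rather than matching individual eigenvectors, controlling this term is precisely where the spectral bound on $\lambda_1(\mathscr{L}_\tau)$ is essential, and this is what generates the $\delta_{\max}/(\tau+\delta_{\max})$ factor in the final estimate.
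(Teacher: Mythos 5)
Your toolkit is the same as the paper's (Lemma \ref{boundL}, Weyl, a Davis--Kahan variant, the spectral bound $\lambda_{1}(\mathscr{L}_{\tau})\leq\delta_{\mathrm{max}}/(\tau+\delta_{\mathrm{max}})$, and a row-wise Lipschitz estimate for normalization), but your three-term telescope creates a genuine gap that the paper's two-term telescope avoids. The commutator $U_{K}(O\Lambda_{K}-\Lambda_{K}O)$ that you yourself flag as ``the hard part'' is not controlled by anything you invoke: its entries are $O_{ij}(\lambda_{j}-\lambda_{i})$, so it is small only if $O$ is close to diagonal, whereas Davis--Kahan applied to the whole top-$K$ eigenspace only guarantees $\|\hat{U}_{K}-U_{K}O\|_{F}$ is small and says nothing about the off-diagonal mass of $O$; in general $\|O\Lambda_{K}-\Lambda_{K}O\|_{F}$ can be of order $\lambda_{1}\sqrt{K}$, i.e.\ $O(1)$ rather than $O(err_{n})$. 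The spectral bound on $\lambda_{1}$ bounds $\|\Lambda_{K}\|$ but cannot make the commutator vanish. To kill this term you would need eigenvector-by-eigenvector alignment (forcing $O$ to be a diagonal sign matrix), which requires gaps between the individual $\lambda_{1},\ldots,\lambda_{K}$ — an assumption not contained in (\ref{a3}). The paper sidesteps the issue entirely by telescoping as
\begin{align*}
\hat{V}_{K+1}\hat{E}_{K+1}-V_{K+1}E_{K+1}=\hat{V}_{K+1}(\hat{E}_{K+1}-E_{K+1})+(\hat{V}_{K+1}-V_{K+1})E_{K+1},
\end{align*}
so that the eigenvector difference is always multiplied by the \emph{population} diagonal $E_{K+1}$: its $(K+1)$-th entry is zero (annihilating the troublesome $(K+1)$-th empirical eigenvector, whose contribution is instead absorbed into the Weyl term via $\hat{\lambda}_{K+1}^{2}\leq err_{n}^{2}$), and no commutator ever appears.

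Two secondary points. First, your plan to ``absorb the rotation into the alignment'' does not match the lemma as stated or its downstream use: the subsequent lemmas run K-medians on the rows of $X^{*}$ itself and the Hausdorff distance only quotients by permutations, not by arbitrary orthogonal matrices, so a residual $\widetilde{O}$ would not drop out. (The paper's own treatment is also cavalier here — it simply declares that the orthogonal matrix from Theorem \ref{VariantDK} is not considered — but your version makes the mismatch explicit and then waves it away.) Second, your normalization Lipschitz constant of $2/m$ would yield twice the stated bound; the paper uses the constant-$1$ inequality $\|\hat{X}^{*}_{i}-X^{*}_{i}\|\leq\|\hat{X}_{i}-X_{i}\|/\mathrm{min}\{\|\hat{X}_{i}\|,\|X_{i}\|\}$. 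This is cosmetic compared with the commutator issue, which is the step that would actually fail.
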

Perform K-medians clustering on the rows of $X^{*}$ and obtain the $K$ cluster centers $v_{1}, v_{2},\ldots, v_{K}\in\mathcal{R}^{1\times (K+1)}$. Let $V$ be the $K\times (K+1)$ matrix such that the $i$-th row of $V$ is $v_{i}$ for $1\leq i\leq K$. To obtain the bound of $\|\hat{V}-V\|_{F}$, we need some extra conditions. Similar as \cite{OCCAM}, we define a Hausdorff distance which is used to measure the dissimilarity between two cluster centers as $D_{H}(S, T)=\mathrm{min}_{\sigma\in \Sigma}\|S-T\sigma\|_{F}$ for any $K\times (K+1)$ matrix $S$ and $T$ where $\Sigma$ is a set of $(K+1)\times (K+1)$ permutation matrix. The sample loss function for K-medians is defined by
\begin{align*}
\mathcal{L}_{n}(Q;S)=\frac{1}{n}\sum_{i=1}^{n}\mathrm{min}_{1\leq k\leq K}\|Q_{i}-S_{k}\|_{F},
\end{align*}
where $Q\in \mathcal{R}^{n\times (K+1)}$ is a matrix whose rows $Q_{i}$ are vectors to be clustered, and $S\in \mathcal{R}^{K\times (K+1)}$ is a matrix whose rows $S_{k}$ are cluster centers. Assuming the rows of $Q$ are i.i.d. random vectors sampled from a distribution $\mathcal{G}$, we similarly define the population loss function for K-medians by
\begin{align*}
\mathcal{L}(\mathcal{G};S)=\int \mathrm{min}_{1\leq k\leq K}\|x-S_{k}\|_{F}d\mathcal{G}.
\end{align*}
Let $\mathcal{F}$ be the distribution of $X^{*}_{i}$,  assume the following conditions on $\mathcal{F}$ hold:
\begin{align}\label{a4}
&\mathrm{~~~~~~Let}~ V_{\mathcal{F}}=\mathrm{argmin}_{U}\mathcal{L}(\mathcal{F};U) \mathrm{~be ~the~ global ~minimizer ~of ~the ~}\\
&~~~~~~\mathrm{population~loss ~ function~} \mathcal{L}(\mathcal{F};U). \mathrm{~Then~} V_{\mathcal{F}}=V\mathrm{~up~ to~ a ~row }\notag\\
&~~~~~~\mathrm{permutation.~Further,~ there~exists~a~global~constant~}\kappa~ \mathrm{such} \notag\\
&~~~~~~\mathrm{that~}\kappa K^{-1}D_{H}(U,V_{\mathcal{F}})\leq\mathcal{L}(\mathcal{F};U)-\mathcal{L}(\mathcal{F};V_{\mathcal{F}})~\mathrm{~for~all~}U.\notag
\end{align}
Condition (\ref{a4}) essentially states that the population K-medians loss function, which is determined by $\mathcal{F}$, has a unique minimum at the right place.

Next lemma bounds $\|\hat{V}-V\|_{F}$.
\begin{lem}\label{boundV}
Under $DCMM(n,P,\Theta,\Pi)$, if assumptions (\ref{a1})-(\ref{a4}) hold, then with probability at least $\mathrm{P}(n, K)$, we have
\begin{align*}
\|\hat{V}- V\|_{F}\leq C\frac{K}{m\sqrt{n}}(\sqrt{K+1}+\frac{K\delta_{\mathrm{max}}}{\lambda_{K}(\tau+\delta_{\mathrm{max}})})err_{n},
\end{align*}
\end{lem}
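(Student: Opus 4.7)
\textbf{Proof proposal for Lemma \ref{boundV}.} The plan is to follow a perturbation argument of the Pollard type for K--medians clustering, chaining three comparisons between sample/population loss functions at the candidate centers $\hat V$ and $V$, and ultimately invoking assumption (\ref{a4}) to convert the loss gap into a geometric distance. Throughout I will identify $\|\hat V-V\|_F$ with the Hausdorff distance $D_H(\hat V,V)$, since after an optimal permutation of rows these coincide.

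First I would establish a uniform Lipschitz bound for the sample loss in the data argument: for any $K\times(K+1)$ center matrix $S$,
\begin{align*}
|\mathcal{L}_n(\hat X^{*};S)-\mathcal{L}_n(X^{*};S)|
\le \frac{1}{n}\sum_{i=1}^n \bigl|\min_k\|\hat X^{*}_i-S_k\|-\min_k\|X^{*}_i-S_k\|\bigr|
\le \frac{1}{\sqrt n}\|\hat X^{*}-X^{*}\|_F,
\end{align*}
using the $1$--Lipschitz property of $x\mapsto\min_k\|x-S_k\|$ followed by Cauchy--Schwarz. Combined with optimality of $\hat V$ for $\mathcal{L}_n(\hat X^{*};\cdot)$, this yields
\begin{align*}
\mathcal{L}_n(X^{*};\hat V)-\mathcal{L}_n(X^{*};V)\le \frac{2}{\sqrt n}\|\hat X^{*}-X^{*}\|_F.
\end{align*}

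Next I would bridge $\mathcal{L}_n(X^{*};\cdot)$ and the population loss $\mathcal{L}(\mathcal{F};\cdot)$. Treating $\mathcal{F}$ as the law of the $X^{*}_i$'s (as in \cite{OCCAM}), a standard uniform concentration argument over the compact set of centers (each row of $X^{*}$ lies on the unit sphere of $\mathcal{R}^{K+1}$) gives $\sup_S|\mathcal{L}_n(X^{*};S)-\mathcal{L}(\mathcal{F};S)|=O(\sqrt{K/n})$ with probability at least some $\mathrm{P}(n,K)\to 1$, by a covering-number/VC argument on the class of functions $x\mapsto\min_k\|x-S_k\|$. Chaining this with the previous display produces
\begin{align*}
\mathcal{L}(\mathcal{F};\hat V)-\mathcal{L}(\mathcal{F};V)\le \frac{2}{\sqrt n}\|\hat X^{*}-X^{*}\|_F+O\!\left(\sqrt{K/n}\right),
\end{align*}
and the second term will be absorbed into the first since $\|\hat X^{*}-X^{*}\|_F$ dominates under assumption (\ref{a2}).

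Finally, assumption (\ref{a4}) converts this loss gap into the desired geometric bound:
\begin{align*}
\|\hat V-V\|_F\;=\;D_H(\hat V,V)\;\le\;\frac{K}{\kappa}\bigl[\mathcal{L}(\mathcal{F};\hat V)-\mathcal{L}(\mathcal{F};V)\bigr]\;\le\;\frac{2K}{\kappa\sqrt n}\|\hat X^{*}-X^{*}\|_F,
\end{align*}
and substituting Lemma \ref{boundXstar} delivers the stated bound with $C$ absorbing $2/\kappa$. The main obstacle, in my view, is the second step: making the concentration of $\mathcal{L}_n(X^{*};\cdot)$ to $\mathcal{L}(\mathcal{F};\cdot)$ fully rigorous, because the rows of $X^{*}$ are not literally i.i.d.\ but are instead the deterministic population-level feature vectors; one must either interpret $\mathcal{F}$ as their empirical law (making the concentration exact and trivial) or pass through a symmetry/permutation argument on pure-node rows, as is implicitly done in the analogous OCCAM analysis. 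The other minor subtlety is verifying that the permutation alignment used in $D_H$ is consistent with the one used when writing $\|\hat V-V\|_F$, which I would handle by relabeling rows of $\hat V$ once and for all at the start.
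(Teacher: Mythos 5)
Your overall architecture matches the paper's proof almost exactly: (i) a $1$-Lipschitz/Cauchy--Schwarz comparison giving $\sup_{U}|\mathcal{L}_{n}(\hat{X}^{*};U)-\mathcal{L}_{n}(X^{*};U)|\leq \|\hat{X}^{*}-X^{*}\|_{F}/\sqrt{n}$, (ii) a uniform law of large numbers for $\mathcal{L}_{n}(X^{*};\cdot)$ over a compact set of centers via a covering argument, and (iii) assumption (\ref{a4}) to convert the population loss gap into $D_{H}(\hat{V},V_{\mathcal{F}})$. Step (i) and step (iii) are exactly what the paper does, and your closing remark about the i.i.d.\ fiction for the rows of $X^{*}$ is a fair observation --- the paper simply declares $\mathcal{F}$ to be ``the distribution of $X^{*}_{i}$'' and applies Bernstein's inequality without further comment, so you are not missing anything the paper supplies.

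The one genuine gap is in how you dispose of the empirical-process term. You assert $\sup_{S}|\mathcal{L}_{n}(X^{*};S)-\mathcal{L}(\mathcal{F};S)|=O(\sqrt{K/n})$ and then claim this is ``absorbed into'' $\|\hat{X}^{*}-X^{*}\|_{F}/\sqrt{n}$ because the latter dominates under assumption (\ref{a2}). That does not follow: assumption (\ref{a2}) is precisely what makes $err_{n}$, and hence $\|\hat{X}^{*}-X^{*}\|_{F}=O(\sqrt{K}\,err_{n}/m)$, small, so if the concentration term really carried no $err_{n}$ factor it would \emph{not} be dominated, and after multiplying by $K/\kappa$ you would be left with an additive $O(K^{3/2}/(\kappa\sqrt{n}))$ term that is absent from the stated bound. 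The paper avoids this by calibrating both the $\epsilon$-net resolution and the Bernstein deviation level to the target accuracy, taking $\epsilon=\varrho=\frac{1}{m\sqrt{n}}(\sqrt{K+1}+\frac{K\delta_{\mathrm{max}}}{\lambda_{K}(\tau+\delta_{\mathrm{max}})})err_{n}$, so that the resulting bound $\frac{\|\hat{X}^{*}-X^{*}\|_{F}}{\sqrt{n}}+2\epsilon+\varrho$ is of exactly the right order; the price is paid in the probability $\mathrm{P}(n,K)=1-(C_{\mathscr{R}}\frac{K+1}{\epsilon}\mathrm{log}\frac{K+1}{\epsilon})^{K+1}\mathrm{exp}(-\frac{n\varrho^{2}}{72+4\varrho})-o(n^{-3})$, which is where the lemma's nonexplicit probability guarantee comes from. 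You should replace the ``absorbed'' claim with this calibration; the rest of your argument then goes through as in the paper.
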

where $C$ is a constant and $\mathrm{P}(n,K)\rightarrow 1$ as $n\rightarrow \infty$.

Assume that there exists a global constant $m_{V}>0$ such that
\begin{align}\label{a5}
\lambda_{\mathrm{min}}(VV')\geq m_{V}.
\end{align}
For convenience, set
\begin{align*}
Err_{n}&=(\frac{2K^{2}}{m_{V}-2\sqrt{K}\frac{CK}{m\sqrt{n}}(\sqrt{K+1}+\frac{K\delta_{\mathrm{max}}}{\lambda_{K}(\tau+\delta_{\mathrm{max}})})err_{n}}+\sqrt{K})\frac{CK(\sqrt{K+1}+\frac{K\delta_{\mathrm{max}}}{\lambda_{K}(\tau+\delta_{\mathrm{max}})})err_{n}}{mm_{V}}\\
&~~~+\frac{K}{mm_{V}}(\sqrt{K+1}+\frac{K\delta_{\mathrm{max}}}{\lambda_{K}(\tau+\delta_{\mathrm{max}})})err_{n}.
\end{align*}
Next lemma bounds $\|\hat{Y}-Y\|_{F}$ under mild conditions, where $Y$ is defined as $Y=\mathrm{max}(X^{*}V'(VV')^{-1},0)$.
\begin{lem}\label{boundY}
Under $DCMM(n,P,\Theta,\Pi)$, if assumptions (\ref{a1})-(\ref{a5}) hold, then with probability at least $\mathrm{P}(n,K)$, we have
\begin{align*}
\|\hat{Y}-Y\|_{F}\leq Err_{n}.
\end{align*}
\end{lem}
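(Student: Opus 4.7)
\textbf{Proof proposal for Lemma \ref{boundY}.} The plan is to reduce the claim to a matrix perturbation calculation and then apply the earlier lemmas. First, I observe that the entry-wise truncation $x \mapsto \max(x,0)$ is $1$-Lipschitz, so applied coordinate-wise to an $n\times K$ matrix it is a (weak) contraction in Frobenius norm. Consequently,
\begin{align*}
\|\hat{Y}-Y\|_{F} \leq \|\hat{X}^{*}\hat{V}'(\hat{V}\hat{V}')^{-1} - X^{*}V'(VV')^{-1}\|_{F},
\end{align*}
and I only need to control the right-hand side. Writing $\hat{M}=\hat{V}'(\hat{V}\hat{V}')^{-1}$ and $M=V'(VV')^{-1}$ (these are right pseudo-inverses of $\hat{V}$ and $V$), I use the telescoping identity
\begin{align*}
\hat{X}^{*}\hat{M}-X^{*}M = (\hat{X}^{*}-X^{*})\hat{M} + X^{*}(\hat{M}-M),
\end{align*}
and further split the pseudo-inverse difference as $\hat{M}-M=(\hat{V}'-V')(VV')^{-1}+\hat{V}'\bigl[(\hat{V}\hat{V}')^{-1}-(VV')^{-1}\bigr]$.

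The main estimates I need are then standard: (i) the inverse perturbation identity $A^{-1}-B^{-1}=B^{-1}(B-A)A^{-1}$ applied with $A=\hat{V}\hat{V}'$ and $B=VV'$, together with $\|\hat{V}\hat{V}'-VV'\|\leq (\|\hat{V}\|+\|V\|)\,\|\hat{V}-V\|$; (ii) Weyl's inequality to pass from the deterministic lower bound $\lambda_{\min}(VV')\geq m_{V}$ of assumption (\ref{a5}) to the event $\lambda_{\min}(\hat{V}\hat{V}')\geq m_{V}-(\|\hat{V}\|+\|V\|)\,\|\hat{V}-V\|$, which yields $\|(\hat{V}\hat{V}')^{-1}\|\leq 1/(m_{V}-2\sqrt{K}\|\hat{V}-V\|_{F})$; (iii) the crude but sufficient norm bounds $\|V\|,\|\hat{V}\|\leq \sqrt{K}$ (rows have unit length) and $\|X^{*}\|_{F}=\sqrt{n}$ (rows of $X^{*}$ are unit vectors); and (iv) the submultiplicative rules $\|AB\|_{F}\leq \|A\|_{F}\|B\|$ and $\|AB\|_{F}\leq \|A\|\|B\|_{F}$.

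With those in hand, the three pieces $(\hat{X}^{*}-X^{*})\hat{M}$, $X^{*}(\hat{V}'-V')(VV')^{-1}$, and $X^{*}\hat{V}'[(\hat{V}\hat{V}')^{-1}-(VV')^{-1}]$ can be bounded by, respectively, $\|\hat{X}^{*}-X^{*}\|_{F}\cdot \sqrt{K}/(m_{V}-2\sqrt{K}\|\hat{V}-V\|_{F})$, $\|X^{*}\|_{F}\|\hat{V}-V\|_{F}/m_{V}$, and $\|X^{*}\|_{F}\|\hat{V}\|\cdot \|\hat{V}\hat{V}'-VV'\|/\bigl[m_{V}(m_{V}-2\sqrt{K}\|\hat{V}-V\|_{F})\bigr]$. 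Plugging in the Lemma \ref{boundXstar} bound $\|\hat{X}^{*}-X^{*}\|_{F}\leq \frac{1}{m}(\sqrt{K+1}+\frac{K\delta_{\max}}{\lambda_{K}(\tau+\delta_{\max})})err_{n}$ and the Lemma \ref{boundV} bound $\|\hat{V}-V\|_{F}\leq \frac{CK}{m\sqrt{n}}(\sqrt{K+1}+\frac{K\delta_{\max}}{\lambda_{K}(\tau+\delta_{\max})})err_{n}$, and using that $\|X^{*}\|_{F}=\sqrt{n}$ cancels the $1/\sqrt{n}$ from Lemma \ref{boundV}, I recover the three summands that appear in $Err_{n}$: the $(\sqrt{K}+\ldots)\cdot CK(\cdots)err_{n}/(mm_{V})$ contribution from the $(\hat{X}^{*}-X^{*})\hat{M}$ and inverse-perturbation terms, and the residual $K(\cdots)err_{n}/(mm_{V})$ contribution from $X^{*}(\hat{V}'-V')(VV')^{-1}$. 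Summing these with the triangle inequality gives $\|\hat{Y}-Y\|_{F}\leq Err_{n}$ on the intersection of the events of Lemmas \ref{boundXstar} and \ref{boundV}, which still has probability $\mathrm{P}(n,K)\to 1$.

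The only subtle point I expect is step (ii): one must know that $2\sqrt{K}\|\hat{V}-V\|_{F}<m_{V}$ so that $\lambda_{\min}(\hat{V}\hat{V}')$ stays bounded away from zero and the denominator in $Err_{n}$ is positive. Under assumption (\ref{a2}) the bound $\|\hat{V}-V\|_{F}$ from Lemma \ref{boundV} tends to zero, so for $n$ large enough this holds automatically; for finite $n$ it is built into the statement by leaving the quantity $m_{V}-2\sqrt{K}\frac{CK}{m\sqrt{n}}(\cdots)err_{n}$ unsimplified inside $Err_{n}$. The remaining steps are routine applications of submultiplicativity and do not need new ideas.
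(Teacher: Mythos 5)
Your proposal is correct and follows essentially the same route as the paper's proof: drop the $\max(\cdot,0)$ by its $1$-Lipschitz property, telescope the product $\hat{X}^{*}\hat{V}'(\hat{V}\hat{V}')^{-1}-X^{*}V'(VV')^{-1}$, control the pseudo-inverse difference via $A^{-1}-B^{-1}=A^{-1}(B-A)B^{-1}$ together with Weyl's inequality and the bounds $\|V\|_{F},\|\hat{V}\|_{F}\le\sqrt{K}$, $\|X^{*}\|_{F}=\sqrt{n}$, and then plug in Lemmas \ref{boundXstar} and \ref{boundV}. The only cosmetic difference is that you attach the perturbed factors to the opposite sides of the telescoping (the paper uses $\hat{X}^{*}(\hat{M}-M)+(\hat{X}^{*}-X^{*})M$), which merely swaps one denominator $m_{V}$ for $m_{V}-2\sqrt{K}\|\hat{V}-V\|_{F}$ and is immaterial to the stated bound.
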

After obtain the bounds for $\|\hat{X}^{*}-X^{*}\|_{F}$, $\|\hat{V}-V\|_{F}$ and $\|\hat{Y}-Y\|_{F}$, we can give  a theoretical bound on $\|\hat{\Pi}-\Pi\|_{F}$ which is the main theoretical result of this paper.
\begin{thm}\label{main}
Under $DCMM(n,P,\Theta,\Pi)$, set $m_{Y}=\mathrm{min}_{i}\{\|Y_{i}\|\}$ as the length of the shortest row in $Y$. If assumptions (\ref{a1})-(\ref{a5}) hold, then with probability at least $\mathrm{P}(n,K)$, we have
\begin{align*}
\frac{\|\hat{\Pi}-\Pi\|_{F}}{\sqrt{n}}\leq \frac{2Err_{n}}{m_{Y}\sqrt{n}}.
\end{align*}
\end{thm}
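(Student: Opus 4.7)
The plan is to reduce the desired bound on $\|\hat{\Pi}-\Pi\|_F$ to the already established bound on $\|\hat{Y}-Y\|_F$ from Lemma \ref{boundY}, exploiting the fact that (in the theoretical setting described in the last remark of Section \ref{sec4}) the rows of $\hat{\Pi}$ and $\Pi$ are obtained by $\ell_2$-normalizing the rows of $\hat{Y}$ and $Y$ respectively, i.e., $\hat{\pi}_i=\hat{Y}_i/\|\hat{Y}_i\|$ and $\pi_i=Y_i/\|Y_i\|$.

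The core technical step I would use is the elementary inequality that for any two nonzero vectors $a,b$,
\begin{align*}
\left\|\frac{a}{\|a\|}-\frac{b}{\|b\|}\right\|
\;\leq\;\frac{2\|a-b\|}{\max\{\|a\|,\|b\|\}},
\end{align*}
which follows from adding and subtracting $a\|a\|$ inside $\|a\|b\|-b\|a\|\|$ and applying the reverse triangle inequality to control $|\,\|a\|-\|b\|\,|\leq\|a-b\|$. Applying this row by row with $a=\hat{Y}_i$, $b=Y_i$, and using the definition $m_Y=\min_i\|Y_i\|$ to lower-bound the denominator by $\|Y_i\|\geq m_Y$, one obtains
\begin{align*}
\|\hat{\pi}_i-\pi_i\|^2 \;\leq\; \frac{4\|\hat{Y}_i-Y_i\|^2}{m_Y^2}.
\end{align*}

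Summing over $i=1,\ldots,n$ and invoking Lemma \ref{boundY} on the event of probability at least $\mathrm{P}(n,K)$ gives
\begin{align*}
\|\hat{\Pi}-\Pi\|_F^2 \;=\; \sum_{i=1}^n\|\hat{\pi}_i-\pi_i\|^2
\;\leq\; \frac{4}{m_Y^2}\|\hat{Y}-Y\|_F^2
\;\leq\; \frac{4\,Err_n^2}{m_Y^2}.
\end{align*}
Taking square roots and dividing by $\sqrt{n}$ yields exactly $\|\hat{\Pi}-\Pi\|_F/\sqrt{n}\leq 2Err_n/(m_Y\sqrt{n})$.

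I do not expect any single hard step here; the proof is essentially a one-line row-wise normalization argument on top of Lemma \ref{boundY}. The only subtleties to check are (i) that assumptions (\ref{a1})--(\ref{a5}) together with $m_Y>0$ ensure $\|\hat{Y}_i\|>0$ on the good event so the normalization is well defined, and (ii) that the column permutation ambiguity inherent in the cluster-center recovery is absorbed consistently through $\hat{V}$ and $V$ as in Lemmas \ref{boundV}--\ref{boundY}, so no additional permutation matrix needs to be tracked in this final step. Everything else is routine algebra.
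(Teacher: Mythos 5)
Your proposal is correct and follows essentially the same route as the paper: the paper's proof likewise adds and subtracts a cross term inside $\|\hat{Y}_i\|Y_i\|-Y_i\|\hat{Y}_i\|\|$ to obtain $\|\hat{\pi}_i-\pi_i\|\leq 2\|\hat{Y}_i-Y_i\|/\|Y_i\|\leq 2\|\hat{Y}_i-Y_i\|/m_Y$, then sums over rows and invokes Lemma \ref{boundY}. Your slightly sharper denominator $\max\{\|\hat{Y}_i\|,\|Y_i\|\}$ reduces to the same final bound once lower-bounded by $m_Y$.
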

\begin{rem}
Assumptions (\ref{a1}) and (\ref{a2}) are the same as that of Lemma 3.2 in \cite{mixedSCORE} since we need to apply the bound of $\|A-\Omega\|$ given by \cite{mixedSCORE} to bound $\|L_{\tau}-\mathscr{L}_{\tau}\|$. It is challenging to estimate the constants $m, m_{V}, m_{Y},\lambda_{K}$ at present, and we leave it as future work.
\end{rem}

\section{Simulations}\label{sec6}
We investigate the performance of our Mixed-ISC by some artificial networks. We compare it with Mixed-SCORE, OCCAM and GeoNMF. Based on the principle that clustering errors under any kind of measurements should not depend on how we label the $K$ communities, we need to consider the permutation of cluster label when measuring the clustering errors. Therefore, we measure the performance of mixed membership community detection method by the mixed-Hamming error rate which is defined as
\begin{align*}
\mathrm{min}_{O\in\{ K\times K\mathrm{permutation~matrix}\}}\frac{1}{n}\|\hat{\Pi}O-\Pi\|_{1},
\end{align*}
where $\Pi$ and $\hat{\Pi}$ are the true and estimated mixed membership matrices respectively. For simplicity, we write the mixed-Hamming error rate as $\sum_{i=1}^{n}\|\hat{\pi}_{i}-\pi_{i}\|_{1}/n$. 

Unless specified, for all experiments below, we set $n=500$ and $K=3$. For $0\leq n_{0}\leq 160$, let each block own $n_{0}$ number of pure nodes. For the top $3n_{0}$ nodes $\{1,2, \ldots, 3n_{0}\}$, we let these nodes be pure and let nodes $\{3n_{0}+1, 3n_{0}+2,\ldots, 500\}$ be mixed. Fixing $x\in [0, \dfrac{1}{2})$, let all the mixed nodes have four different memberships $(x, x, 1-2x), (x, 1-2x, x), (1-2x, x, x)$ and $(1/3,1/3,1/3)$, each with $\dfrac{500-3n_{0}}{4}$ number of nodes. Fixing $\rho\in(0, 1)$, the mixing matrix $P$ has diagonals 0.8 and off-diagonals $\rho$. Fixing $z\geq 1$, we generate the degree parameters such that $1/\theta(i)\overset{iid}{\sim}U(1,z)$, where $U(1,z)$ denotes the uniform distribution on $[1, z]$. For  each parameter setting, we report the averaged mixed-Hamming error rate over 50 repetitions. The details of designed experiments are described as follows:

\texttt{Experiment 1: Fraction of pure nodes.} Fix $(x,\rho, z)=(0.4, 0.3, 4)$ and let $n_{0}$ range in $\{40, 60, 80, 100, 120, 140, 160\}$. A larger $n_{0}$ indicates a case with higher fraction of pure nodes. The numerical results are shown in the top left panel of Figure \ref{EX}, from which we can find that all methods perform poor when the fraction of pure nodes is small. For a larger $n_{0}$, Mixed-ISC performs similar as Mixed-SCORE and both two algorithms outperform OCCAM and GeoNMF.

\texttt{Experiment 2: Connectivity across communities.} Fix $(x,n_{0},z)=(0.4, 100,4)$ and let $\rho$ range in $\{0, 0.05, 0.1, \ldots, 0.4\}$. A lager $\rho$ generate more edges across different communities (hence a dense network). The results are displayed in the top right panel of Figure \ref{EX}. We can find that all methods perform poorer as $\rho$ increases, this phenomenon occurs due to the fact that more edges across different communities lead to a case that different communities tend to be into a giant community and hence a case that is more challenging to detect for any algorithms. Meanwhile, generally speaking, the results show that our Mixed-ISC always outperforms its competitors in this experiment.

\begin{figure}
\centering
\subfigure[Experiment 1]{\includegraphics[width=0.36\textwidth]{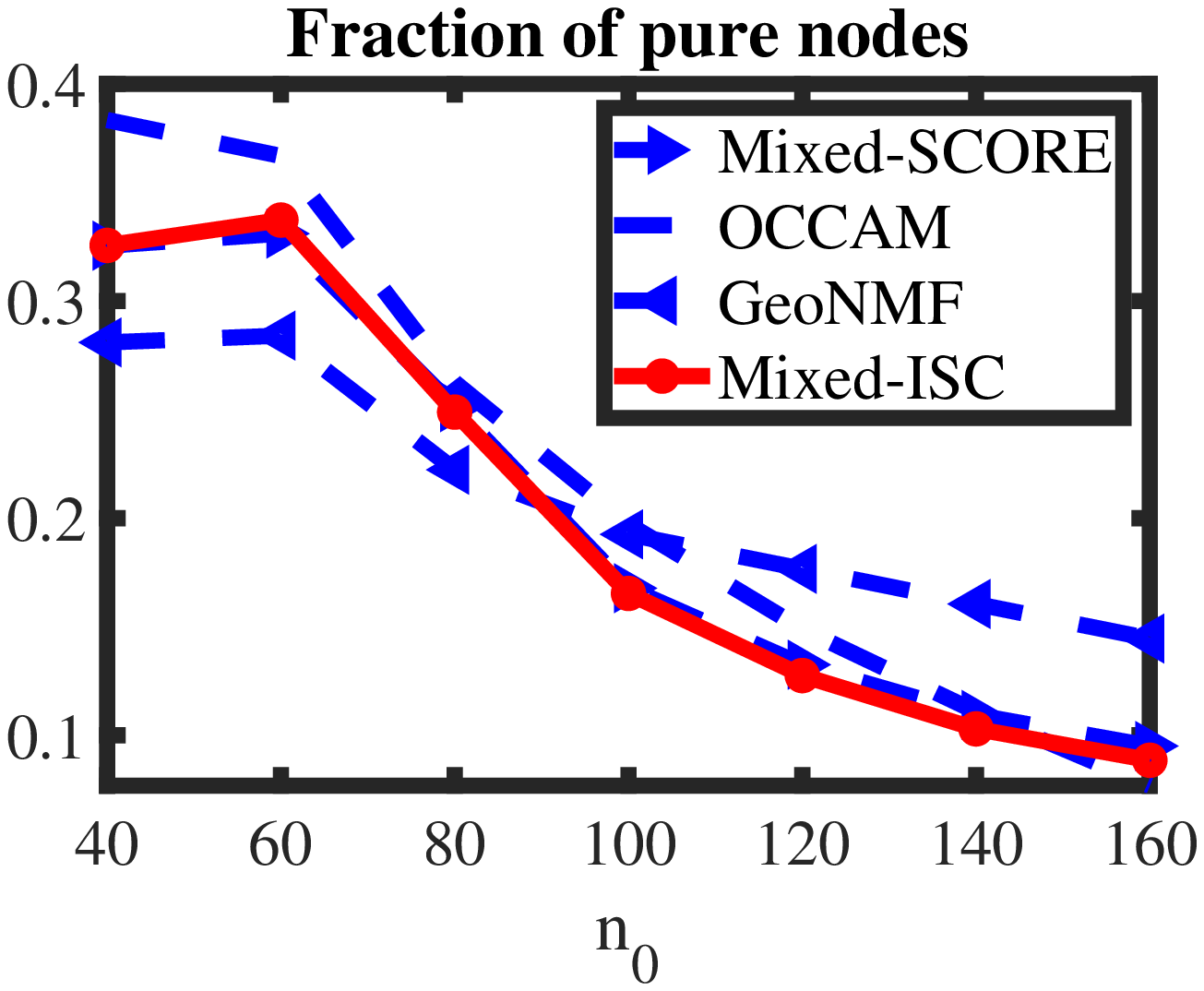}}
\subfigure[Experiment 2]{\includegraphics[width=0.36\textwidth]{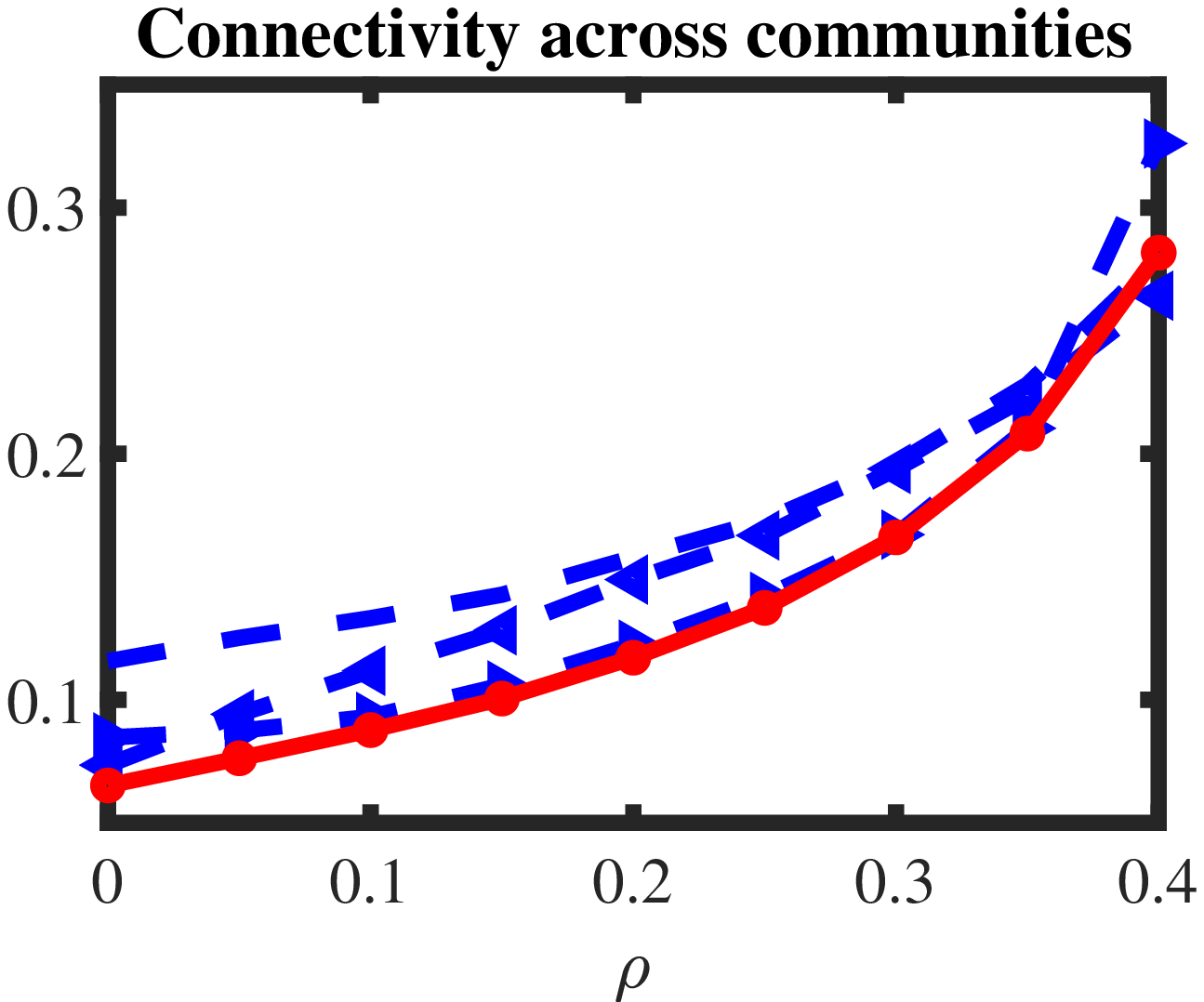}}
\subfigure[Experiment 3]{\includegraphics[width=0.36\textwidth]{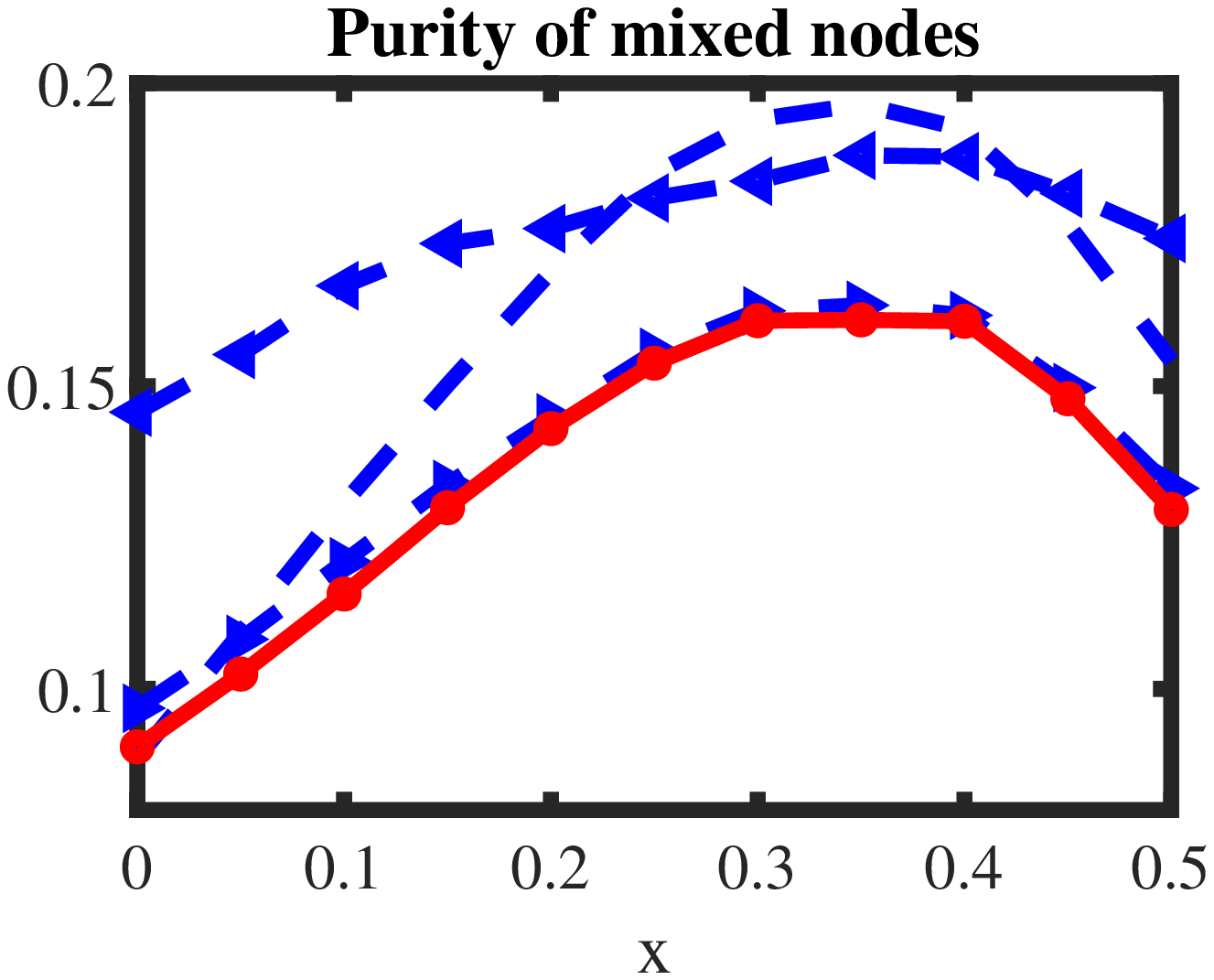}}
\subfigure[Experiment 4]{\includegraphics[width=0.36\textwidth]{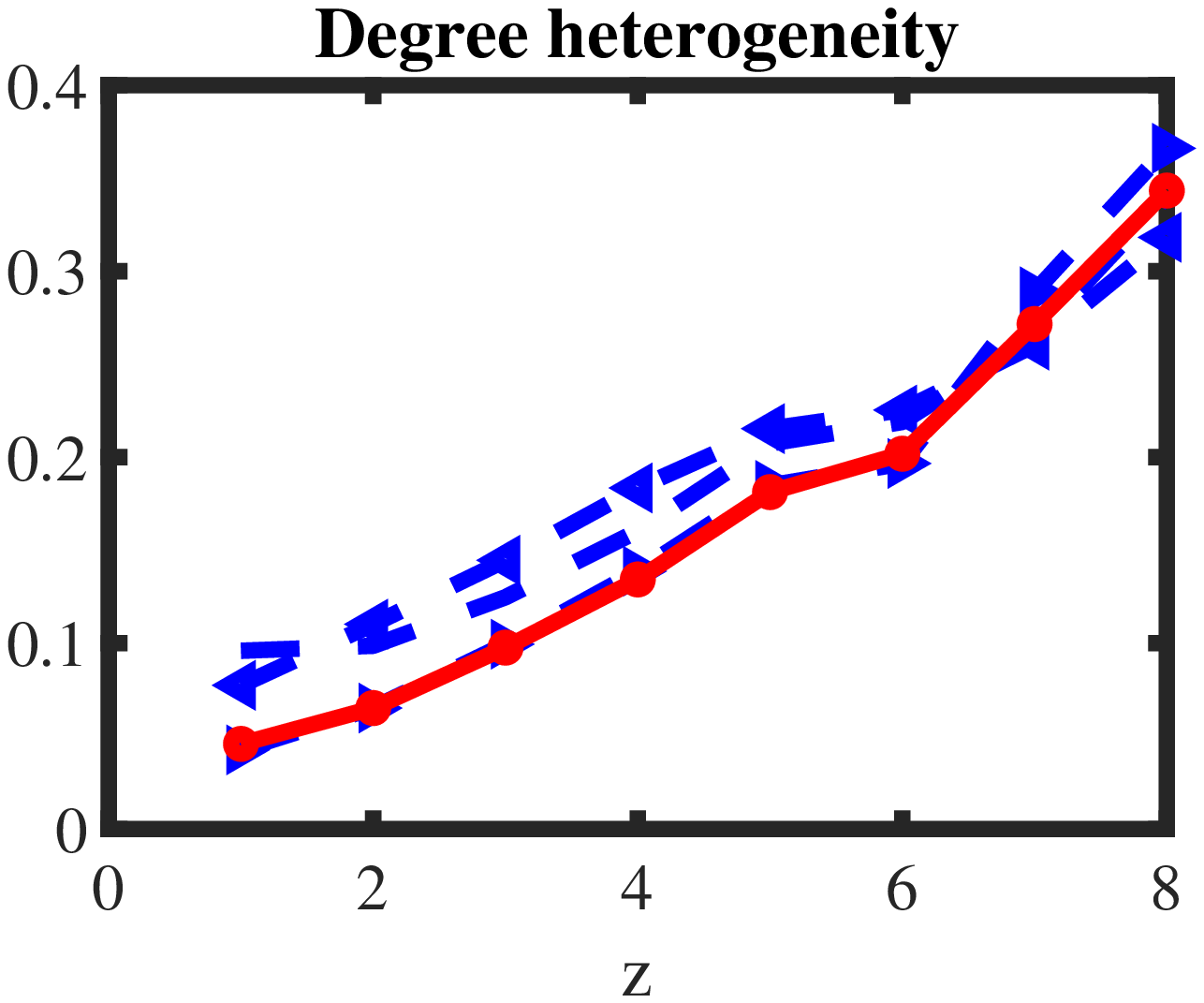}}
\caption{Estimation errors of Experiments 1-4 (y-axis: the mean of $\sum_{i=1}^{n}n^{-1}\|\hat{\pi}_{i}-\pi_{i}\|_{1}$).}
\label{EX}
\end{figure}
\texttt{Experiment 3: Purity of mixed nodes.} Fix $(n_{0},\rho, z)=(100, 0.3, 4)$, and let $x$ range in $\{0, 0.05, \ldots, 0.5\}$. As $x$ increases to 1/3, these mixed nodes become less pure and they become more pure as $x$ increases further. The bottom left panel of Figure \ref{EX} records the numerical results of this experiment. It is obvious that Mixed-ISC shares similar performances with Mixed-SCORE and both two methods significantly outperform OCCAM and GeoNMF in this experiment.

\texttt{Experiment 4: Degree heterogeneity.} Fix $(n_{0},\rho, x)=(100, 0.3, 0.4)$, and let $z$ range in $\{1,2,\ldots,8\}$. A larger $z$ gives smaller $\theta(i)$ for any node $i$, hence a more heterogeneous case and fewer edges generated (a case that is more challenging to detect for any algorithms). The bottom right panel of Figure \ref{EX} presents the numerical results. We see that all methods perform poorer as $z$ increases. Mixed-ISC and Mixed-SCORE have competitive perfromances and both two methods enjoy better performances than that of OCCAM and GeoNMF for a small $z$. However, when $z$ is too large (see, $z\geq 6$), all methods perform similar with high mixed-Hamming error rates.
\section{Application to empirical datasets}\label{sec7}
In this section, we study the performances of Mixed-ISC as well as its competitors on SNAP ego-networks and the Coauthorship network.
\subsection{Application to SNAP ego-networks} The ego-networks dataset can be found in \href{http}{http://snap.stanford.edu/data/}. Since the mixed memberships of SNAP ego-networks are known \cite{leskovec2016snap,SNAPego,mcauley2014discovering,OCCAM}, we use the SNAP ego-networks to investigate the performances of our Mixed-ISC via comparing it with Mixed-SCORE, OCAM and GeoNMF by reporting the mixed-Hamming error rate. To get a better sense of what the different social networks look like and how different characteristics potentially affect performance of our Mixed-ISC, we report the following summary statistics for each network: (1) number of nodes $n$ and number of communities $K$. (2) average node degree $\bar{d}$. (3) density $\sum_{i,j}A(i,j)/(n(n-1))$, i.e., the overall edge probability. (4) the proportion of overlapping nodes $r_{o}$, i.e., $r_{o}=\frac{\mathrm{number~of~nodes~with~mixed~membership}}{n}$. We report the means and standard deviations of these measures for each of the social networks in Table \ref{dataSNAP}.

From Table \ref{dataSNAP}, we see that Facebook and GooglePlus networks tend to be lager than Twitter networks, while Twitter networks are denser with larger density. Meanwhile, the proportions of overlapping nodes in Twitter networks tend to be larger than that of Facebook and GooglePlus networks. Compared with the ego-networks applied in \cite{OCCAM}, the parsed ego-networks used in this paper have more networks to be analyzed. 
\begin{table}[h!]
\centering
\caption{Mean (SD) of summary statistics for ego-networks.}
\label{dataSNAP}
\begin{tabular}{cccccccccc}
\toprule
&\#Networks&$n$&$K$&$\bar{d}$&Density&$r_{o}$\\
\midrule
Facebook&7&236.57&3&30.61&0.15&0.009\\
&-&(228.53)&(1.15)&(29.41)&(0.058)&(0.008)\\
\hline
GooglePlus&58&433.22&2.22&66.81&0.18&0.005\\
&-&(327.70)&(0.46)&(65.2)&(0.11)&(0.005)\\
\hline
Twitter&255&60.64&2.63&17.87&0.33&0.02\\
&-&(30.77)&(0.83)&(9.97)&(0.17)&(0.008)\\
\bottomrule
\end{tabular}
\end{table}

We also report these  statistics for strong signal networks and weak signal networks of SNAP ego-networks in Table \ref{dataSNAPstrongA} and Table \ref{dataSNAPweakA}, respectively, where we use $\mathrm{Facebook}_{\mathrm{strong}}$ to denote strong signal networks in Facebook (similar notations hold for other networks) for convenience. From Table \ref{dataSNAPstrongA} and Table \ref{dataSNAPweakA}, we see that weak signal networks and strong signal networks share similar values of $n, \bar{d}, \mathrm{Density}$ and $r_{o}$ while weak signal networks tend to have larger $K$ than that of strong signal networks.
\begin{table}[h!]
\centering
\caption{Mean (SD) of summary statistics for strong signal ego-networks.}
\label{dataSNAPstrongA}
\begin{tabular}{cccccccccc}
\toprule
&\#Networks&$n$&$K$&$\bar{d}$&Density&$r_{o}$\\
\midrule
$\mathrm{Facebook}_{\mathrm{strong}}$&7&236.57&3&30.61&0.15&0.009\\
&-&(228.53)&(1.15)&(29.41)&(0.058)&(0.008)\\
\hline
$\mathrm{GoolePlus}_{\mathrm{strong}}$&45&433.36&2.20&63.14&0.17&0.005\\
&-&(317.08)&(0.40)&(49.45)&(0.10)&(0.005)\\
\hline
$\mathrm{Twitter}_{\mathrm{strong}}$&156&61.42&2.47&18.50&0.33&0.02\\
&-&(31.89)&(0.73)&(10.58)&(0.15)&(0.008)\\
\bottomrule
\end{tabular}
\end{table}
\begin{table}[h!]
\centering
\caption{Mean (SD) of summary statistics for weak signal ego-networks.}
\label{dataSNAPweakA}
\begin{tabular}{cccccccccc}
\toprule
&\#Networks&$n$&$K$&$\bar{d}$&Density&$r_{o}$\\
\midrule
$\mathrm{Facebook}_{\mathrm{weak}}$&0&-&-&-&-&-\\
&-&-&-&-&-&-\\
\hline
$\mathrm{GoolePlus}_{\mathrm{weak}}$&13&436.23&2.31&79.52&0.22&0.005\\
&-&(376.06)&(0.63)&(104.89)&(0.13)&(0.004)\\
\hline
$\mathrm{Twitter}_{\mathrm{weak}}$&99&59.41&2.87&16.87&0.33&0.02\\
&-&(29.03)&(0.91)&(8.90)&(0.19)&(0.007)\\
\bottomrule
\end{tabular}
\end{table}

To compare methods, we report mean and the corresponding standard deviation of mixed-Hamming error rates for each platforms in Table \ref{ErrorSNAPstrong} and Table \ref{ErrorSNAPweak}. For Facebook networks, Mixed-SCORE performs slightly better than Mixed-ISC, GeoNMF and OCCAM. For both strong and weak signal networks in GooglePlus and Twitter, Mixed-ISC outperforms other methods, while OCCAM and GeoNMF have similar results and are better than Mixed-SCORE. We also can find that the results in Table \ref{ErrorSNAPstrong} are smaller than that in Table \ref{ErrorSNAPweak} for these four methods in GooglePlus networks and Twitter networks. We may argue that weak signal networks may contain more noise and are much harder to detect communities than strong signal networks. Therefore, it is worth to study the problem of community detection for weak signal networks and strong signal networks separately.

\begin{table}[h!]
\centering
\caption{Mean (SD) of mixed-Hamming error rates for strong signal ego-networks.}
\label{ErrorSNAPstrong}
\begin{tabular}{cccccccccc}
\toprule
&$\mathrm{Facebook}_{\mathrm{strong}}$&$\mathrm{GooglePlus}_{\mathrm{strong}}$&$\mathrm{Twitter}_{\mathrm{strong}}$\\
\midrule
Mixed-SCORE&\textbf{0.2496}(0.1322)&0.3703(0.1088)&0.2905(0.1406)\\
OCCAM&0.2610(0.1367)&0.3510(0.1276)&0.2709(0.1508)\\
GeoNMF&0.2584(0.1262)&0.3435(0.1115)&0.2712(0.1375)\\
\hline
Mixed-ISC&0.2582(0.1295)&\textbf{0.3117}(0.1330)&\textbf{0.2414}(0.1446)\\
\bottomrule
\end{tabular}
\end{table}
\begin{table}[h!]
\centering
\caption{Mean (SD) of mixed-Hamming error rates for weak signal ego-networks.}
\label{ErrorSNAPweak}
\begin{tabular}{cccccccccc}
\toprule
&$\mathrm{Facebook}_{\mathrm{weak}}$&$\mathrm{GooglePlus}_{\mathrm{weak}}$&$\mathrm{Twitter}_{\mathrm{weak}}$\\
\midrule
Mixed-SCORE&-&0.3983(0.0927)&0.3376(0.1047)\\
OCCAM&-&0.3751(0.0971)&0.3102(0.1182)\\
GeoNMF&-&0.3770(0.0909)&0.3075(0.1107)\\
\hline
Mixed-ISC&-&\textbf{0.3527}(0.1072)&\textbf{0.2847}(0.1172)\\
\bottomrule
\end{tabular}
\end{table}

Furthermore, similar as that in \cite{SCORE+}, the weak signal networks can also be defined based on $A$'s variant, the regularized Laplacian matrix $L_{\tau}$, as follows:
\begin{defin}\label{weakL}
Given an un-weighted, undirected, no-loops and connected network $\mathcal{N}$ with $K$ communities. The network is a weak signal network if $1-|\frac{\hat{\lambda}_{K+1}}{\hat{\lambda}_{K}}|\leq 0.1$, otherwise it is strong signal network, where $\hat{\lambda}_{k}$ denotes the $k$-the leading eigenvalue of $L_{\tau}, k=K, K+1$.
\end{defin}
Follow this definition, numbers of weak signal networks of the SNAP ego-networks are slightly different as that based on $A$, see Table \ref{SNAPweakL}. Based on the regularized Laplacian matrix, the number of weak signal networks in Facebook networks and GooglePlus networks  increase, while it decreases for Twitter networks. Meanwhile, we also report the summary statistics for strong signal networks and weak signal networks (defined based on $L_{\tau}$) in Table \ref{dataSNAPstrongLtau} and Table \ref{dataSNAPweakLtau}. Compared with the summaries in Table  \ref{dataSNAPstrongA} and \ref{dataSNAPweakA}, the properties of these networks are slightly different.  
\begin{table}[h!]
\centering
\caption{Numbers of weak signal and strong signal networks (defined based on $L_{\tau}$) in SNAP ego-networks.}
\label{SNAPweakL}
\begin{tabular}{cccccccccc}
\toprule
&\#Networks&\# Weak Signal networks&\# Strong Signal networks\\
\midrule
Facebook&7&1&6\\
\hline
GooglePlus&58&16&42\\
\hline
Twitter&255&96&159\\
\bottomrule
\end{tabular}
\end{table}

\begin{table}[h!]
\centering
\caption{Mean (SD) of summary statistics for strong signal (defined based on $L_{\tau}$) ego-networks.}
\label{dataSNAPstrongLtau}
\begin{tabular}{cccccccccc}
\toprule
&\#Networks&$n$&$K$&$\bar{d}$&Density&$r_{o}$\\
\midrule
$\mathrm{Facebook}_{\mathrm{strong}}$&6&172.50&2.67&29.06&0.17&0.011\\
&-&(167.89)&(0.82)&(31.91)&(0.048)&(0.008)\\
\hline
$\mathrm{GoolePlus}_{\mathrm{strong}}$&42&428.88&2.21&63.32&0.18&0.005\\
&-&(337.00)&(0.47)&(50.81)&(0.10)&(0.005)\\
\hline
$\mathrm{Twitter}_{\mathrm{strong}}$&159&58.01&2.44&18.07&0.34&0.02\\
&-&(26.81)&(0.67)&(10.09)&(0.15)&(0.007)\\
\bottomrule
\end{tabular}
\end{table}
\begin{table}[h!]
\centering
\caption{Mean (SD) of summary statistics for weak signal (defined based on $L_{\tau}$) ego-networks.}
\label{dataSNAPweakLtau}
\begin{tabular}{cccccccccc}
\toprule
&\#Networks&$n$&$K$&$\bar{d}$&Density&$r_{o}$\\
\midrule
$\mathrm{Facebook}_{\mathrm{weak}}$&1&621&5&39.93&0.06&0.002\\
&-&(0)&(0)&(0)&(0)&(0)\\
\hline
$\mathrm{GoolePlus}_{\mathrm{weak}}$&16&444.63&2.25&75.97&0.19&0.005\\
&-&(312.17)&(0.48)&(94.73)&(0.13)&(0.005)\\
\hline
$\mathrm{Twitter}_{\mathrm{weak}}$&96&65.00&2.94&17.53&0.31&0.02\\
&-&(36.13)&(0.96)&(9.82)&(0.18)&(0.008)\\
\bottomrule
\end{tabular}
\end{table}

Table \ref{ErrorSNAPstrongL} and Table \ref{ErrorSNAPweakL} report the numerical results of Mixed-ISC and its three competitors. Except the strong signal networks of the Facebook datasets, Mixed-ISC always significantly outperforms other methods on both strong and weak signal networks of GooglePlus and Twitter datasets. Furthermore, Mixed-ISC performs best on the weak signal network of Facebook datasets. This results are slightly different as that in Table \ref{ErrorSNAPstrong} and Table \ref{ErrorSNAPweak}. It is very interesting to find that, in the mass, the mixed Humming error rates in Table \ref{ErrorSNAPstrongL} are smaller than the results in Table \ref{ErrorSNAPstrong}, but the error rates in Table \ref{ErrorSNAPweakL} are larger than the results in Table \ref{ErrorSNAPweak}.  Perhaps the regularized Laplacian matrix $L_{\tau}$ is more sensitive to identify weak signal networks than the adjacency matrix $A$.
\begin{table}[h!]
\centering
\caption{Mean (SD) of mixed-Hamming error rates for strong signal (defined based on $L_{\tau}$) ego-networks.}
\label{ErrorSNAPstrongL}
\begin{tabular}{cccccccccc}
\toprule
&$\mathrm{Facebook}_{\mathrm{strong}}$&$\mathrm{GooglePlus}_{\mathrm{strong}}$&$\mathrm{Twitter}_{\mathrm{strong}}$\\
\midrule
Mixed-SCORE&\textbf{0.2339}(0.1374)&0.3675(0.1095)&0.2753(0.1424)\\
OCCAM&0.2530(0.1479)&0.3420(0.1206)&0.2507(0.1515)\\
GeoNMF&0.2460(0.1368)&0.3400(0.1105)&0.2558(0.1412)\\
\hline
Mixed-ISC&0.2549(0.1415)&\textbf{0.3062}(0.1321)&\textbf{0.2215}(0.1453)\\
\bottomrule
\end{tabular}
\end{table}
\begin{table}[h!]
\centering
\caption{Mean (SD) of mixed-Hamming error rates for weak signal (defined based on $L_{\tau}$) ego-networks.}
\label{ErrorSNAPweakL}
\begin{tabular}{cccccccccc}
\toprule
&$\mathrm{Facebook}_{\mathrm{weak}}$&$\mathrm{GooglePlus}_{\mathrm{weak}}$&$\mathrm{Twitter}_{\mathrm{weak}}$\\
\midrule
Mixed-SCORE&0.3441&0.4003(0.0922)&0.3636(0.0788)\\
OCCAM&0.3092&0.3940(0.1176)&0.3454(0.0937)\\
GeoNMF&0.2939&0.3854(0.0949)&0.3376(0.0841)\\
\hline
Mixed-ISC&\textbf{0.2782}&\textbf{0.3595}(0.1110)&\textbf{0.3190}(0.0914)\\
\bottomrule
\end{tabular}
\end{table}

\subsection{Application to Coauthorship network}
Since there is no ground truth of the nodes membership for the Coauthorship network \cite{ji2016coauthorship, mixedSCORE}, similar as that in \cite{mixedSCORE}, we only provide the estimated PMFs of the ``Carroll-Hall'' community \footnote{The respective estimated PMF of the ``North Carolina'' community for an author just equals 1 minus the author's weight of the ``Carroll-Hall'' community.} for 20 authors, where the 20 authors are also studied in Table 2 in \cite{mixedSCORE} and 19 of them (except Jiashun Jin) are regarded with highly mixed memberships in \cite{mixedSCORE}.  The results are in Table \ref{Coauthorship}.
\begin{table}
\centering
\caption{Estimated PMF of the ``Carroll-Hall'' community for the Coauthorship network.}
\label{Coauthorship}
\begin{tabular}{lccccccccccccc}
\hline
Methods &Mixed-ISC&Mixed-SCORE&OCCAM&GeoNMF\\
\hline
Jianqing Fan&87.36\%&56.21\% &65.51\%&78.01\%\\
Jason P Fine&97.16\%&56.79\% &65.15\%&78.21\%\\
Michael R Kosorok&95.13\%&62.45\% &61.55\%&80.12\%\\
J S Marron&100.00\%&41.00\% &74.06\%&72.48\%\\
Hao Helen Zhang&100.00\%&48.45\% &70.05\%&75.27\%\\
Yufeng Liu&100.00\%&46.03\% &71.39\%&74.38\%\\
Xiaotong Shen&100.00\%&41.00\% &74.06\%&72.48\%\\
Kung-Sik Chan&98.70\%&42.33\% &73.37\%&72.99\%\\
Yichao Wu&100.00\%&51.42\% &68.35\%&76.34\%\\
Yacine Ait-Sahalia&91.06\%&51.69\% &68.20\%&76.43\%\\
Wenyang Zhang&91.17\%&51.69\% &68.20\%&76.43\%\\
Howell Tong&94.36\%&47.34\% &70.66\%&74.86\%\\
Chunming Zhang&89.83\%&52.03\% &68.00\%&76.55\%\\
Yingying Fan&82.52\%&44.17\% &72.39\%&73.69\%\\
Rui Song&91.21\%&52.65\% &67.64\%&76.77\%\\
Per Aslak Mykland&93.54\%&47.43\% &70.62\%&74.89\%\\
Bee Leng Lee&96.52\%&57.51\% &64.71\%&78.46\%\\
Runze Li&100.00\%&88.82\% &41.08\%&88.09\%\\
Jiancheng Jiang&75.59\%&29.41\% &79.72\%&67.83\%\\
Jiashun Jin&100.00\%&100.00\% &0.00\%&99.87\%\\
\hline
\end{tabular}
\end{table}

From Table \ref{Coauthorship}, we can find that Mixed-ISC, OCCAM and GeoNMF tend to classify authors in this table into the ``Carroll-Hall'' community (except Runze Li and Jiashun Jin for OCCAM method, which puts the two authors into the ``North Carolina'' community.), and this kind of classification is quite different from that of Mixed-SCORE. Particularly, the results show that Mixed-ISC returns extremely high weights on the ``Caroll-Hall'' community for the 20 authors. There are huge differences of the estimated PMFs between Mixed-ISC (and GeoNMF) and Mixed-SCORE on the following 9 authors:  J S Marron, Hao Helen Zhang, Yufeng Liu, Xiaotong Shen, Kung-Sik Chan, Howell Tong, Yingying Fan, Per Aslak Mykland, Jiancheng Jiang. We analyze Yingying Fan and Jiancheng Jiang in detail based on papers published by them in the top 4 journals during the time period of the Coauthorship network dataset.
\begin{itemize}
  \item For Yingying Fan, she published 6 papers on the top 4 journals while she coauthored with Jianqing Fan with 4 papers. Therefore, we tend to believe that Yingying Fan is more on the ``Carroll-Hall'' community since Jianqing Fan is more on this community.
  \item For Jiancheng Jiang, he published 10 papers on the top 4 journals while he coauthored with Jianqing Fan with 9 papers. Therefore, we tend to believe that Jiancheng Jiang is more on the ``Carroll-Hall'' community since Jianqing Fan is more on this community.
\end{itemize}
\subsection{Discussion on the choices of $\tau$}
There is no practical criterion for choosing a best Laplacian regularizer $\tau$ for Mixed-ISC at present. Since $\tau =c d$ where $c$ is a nonnegative constant and $d$ should have the same order as the observed average degree, there are two directions to study the effect of $\tau$ to Mixed-ISC: 
 one direction is changing $d$ when $c$ is fixed, and another direction is changing $c$ with fixed $d$.

For the first direction, we fixed $c=0.1$, and let $d$ have three choices $\bar{d}, d_{\mathrm{max}}$ and $\frac{d_{\mathrm{max}}+d_{\mathrm{min}}}{2}$, where $\bar{d}=\sum_{i=1}^{n}D(i,i)/n$. 
Table \ref{ErrorSNAPISCd} reports the Mean (SD) of mixed-Hamming error rates for different $d$. Numerical results in this two table show that Mixed-ISC works satisfactory for the $d$'s three different choices $\bar{d}, d_{\mathrm{max}}$ and $\frac{d_{\mathrm{max}}+d_{\mathrm{min}}}{2}$. Therefore, we set $\frac{d_{\mathrm{max}}+d_{\mathrm{min}}}{2}$ as the default $d$ in our Mixed-ISC, just as the default setting in the ISC method \cite{ISC}. 

\begin{table}[h!]
\centering
\caption{Mean (SD) of mixed-Hamming error rates on SNAP ego-networks with different choice of $d$ for Mixed-ISC when $\tau=0.1d$.}
\label{ErrorSNAPISCd}
\begin{tabular}{cccccccccc}
\toprule
d&Facebook&Google+&Twitter\\
\midrule
$\bar{d}$&0.2566(0.1293)&0.3151(0.1241)&0.2579(0.1366)\\
$d_{\mathrm{max}}$&0.2932(0.1549)&0.3221(0.1210)&0.2590(0.1379)\\
$\frac{d_{\mathrm{max}}+d_{\mathrm{min}}}{2}$&0.2582(0.1295)&0.3209(0.1280)&0.2582(0.1360)\\
\bottomrule
\end{tabular}
\end{table}
For the second direction, we fix $d$ as $\frac{d_{\mathrm{max}}+d_{\mathrm{min}}}{2}$, and change $c$. The results are displayed in Figure \ref{test_c}. We see that Mixed-ISC is robust on the choice of $c$ as long as $\tau$ is in the same order as the observed average degree. Combine with numerical results of Table \ref{ErrorSNAPISCd} and
Figure \ref{test_c}, we conclude that Mixed-ISC is insensitive to the choice of $\tau$.
\begin{figure}
    \centering
    \subfigure[]{
        \includegraphics[width=0.31\textwidth]{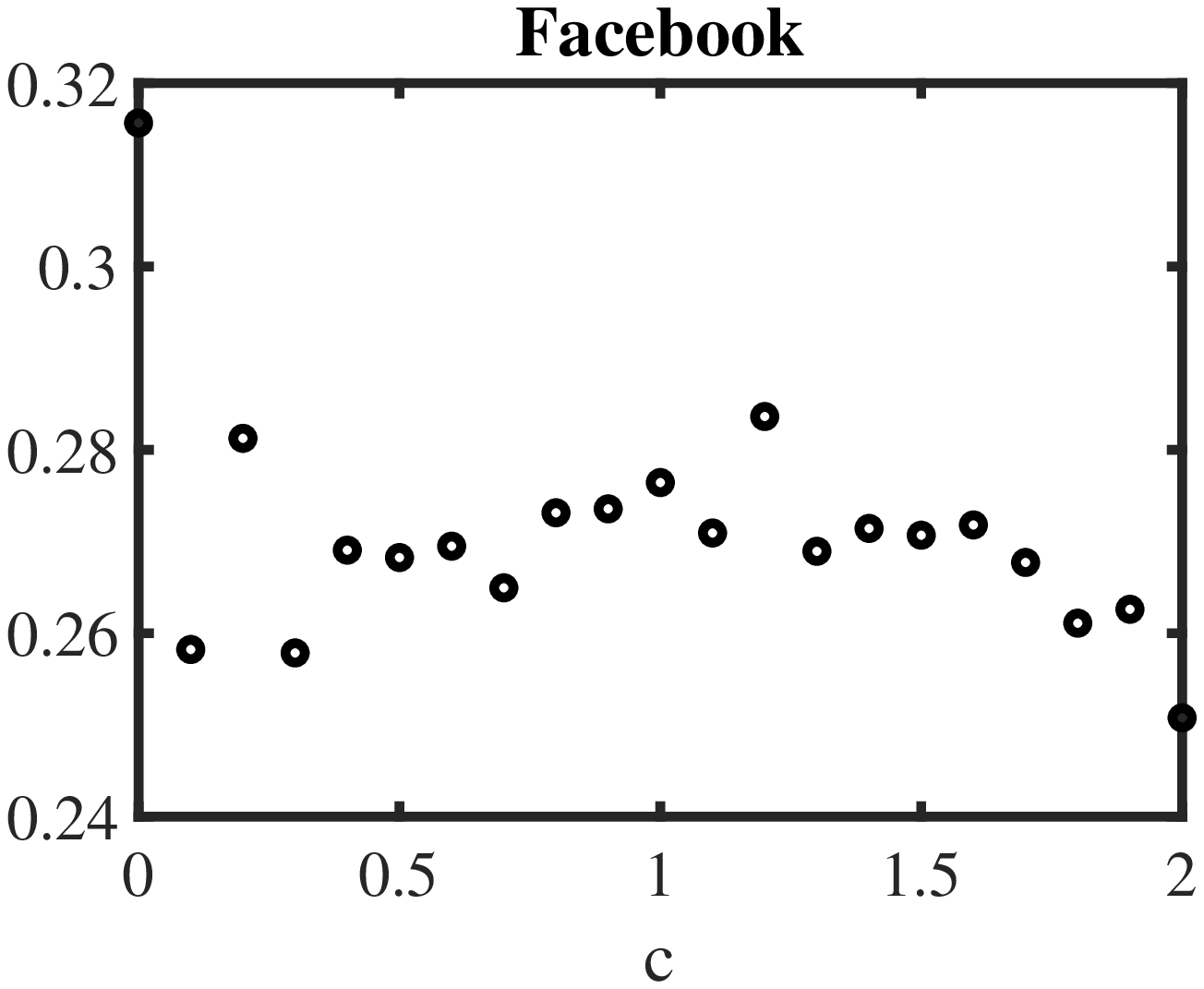}
    }
    \subfigure[]{
        \includegraphics[width=0.31\textwidth]{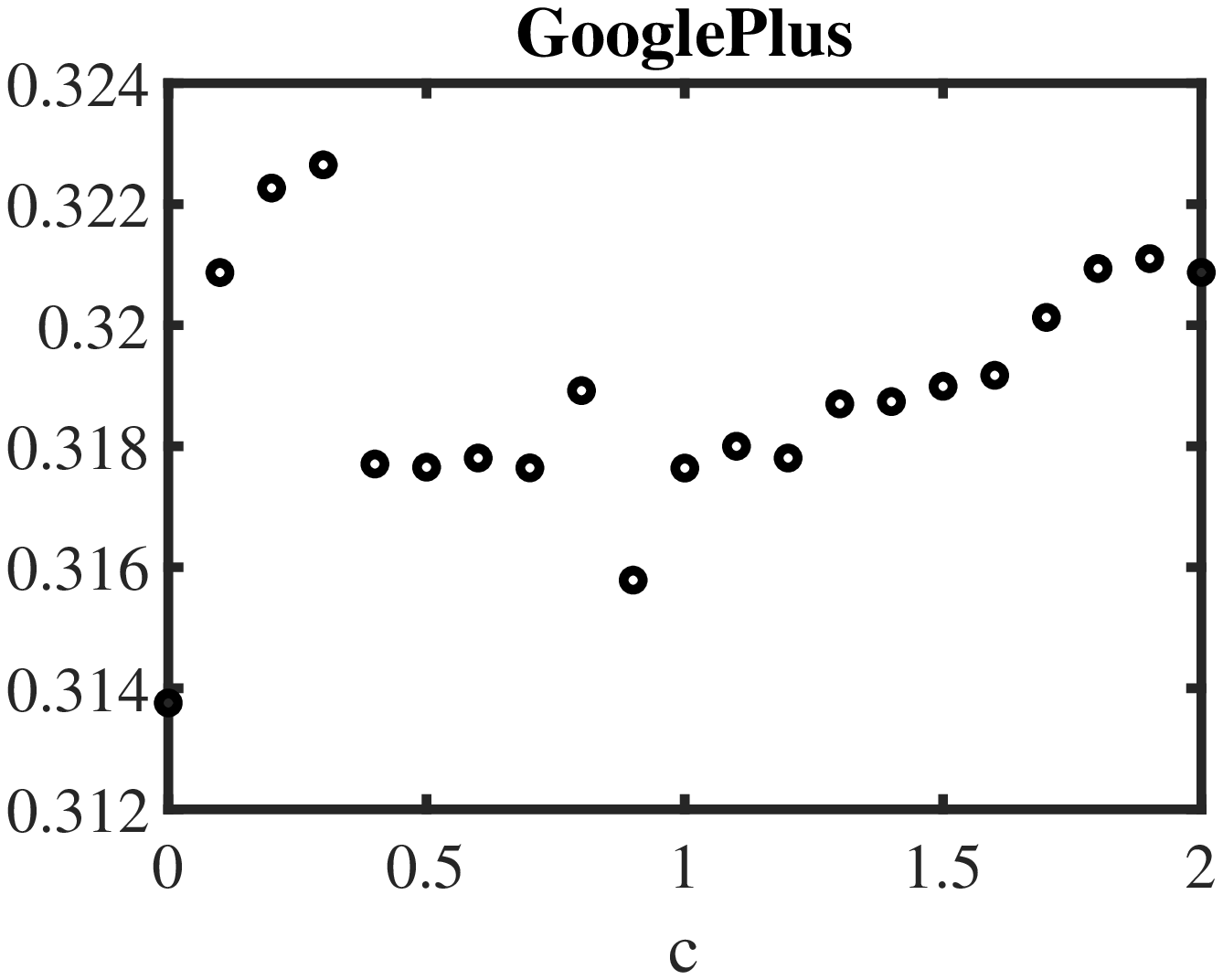}
    }
    \subfigure[]{
        \includegraphics[width=0.31\textwidth]{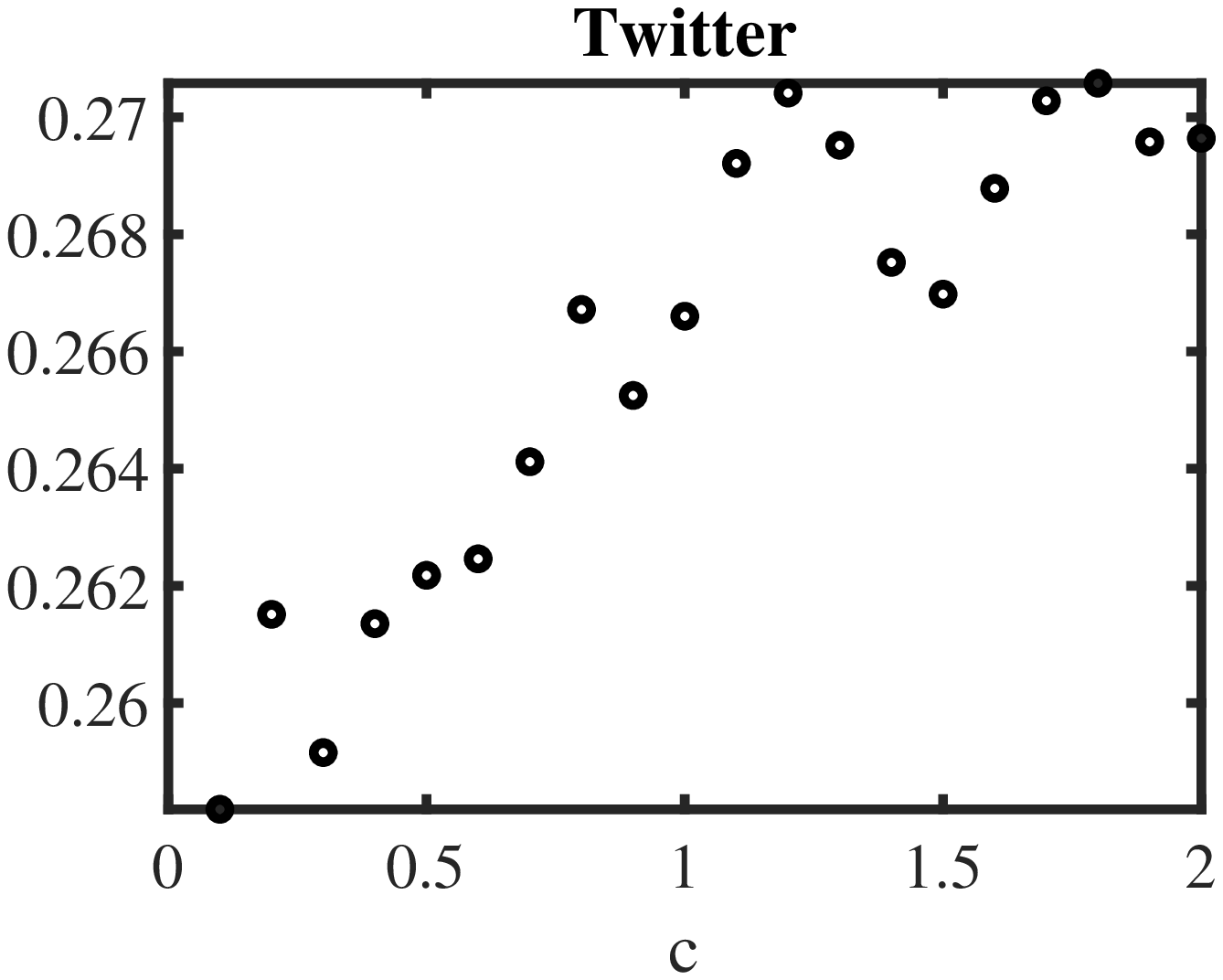}
    }
    \caption{Top three panels: mean of mixed-Hamming error rates on SNAP ego-networks for Mixed-ISC when $\tau=c\frac{d_{\mathrm{max}}+d_{\mathrm{min}}}{2}$, where $c$ is in $\{0,0.1,0.2,\ldots,2\}$. x-axis: $c$, y-axis: mean of mixed-Hamming error rates.}
    \label{test_c} 
\end{figure}
\section{Discussion}\label{sec8}
This paper makes one major contribution, Mixed-ISC method for mixed membership community detection under the DCMM model. Mixed-ISC can detect both strong signal and weak signal networks by considering the leading $K+1$ eigenvectors and eigenvalues of $L_{\tau}$. The Mixed-ISC method is an extension of the ISC method \cite{ISC}, and it inherits nice properties of ISC such that Mixed-ISC is insensitive to the choice of tuning parameters $\tau$. We showed the estimation consistency of Mixed-ISC under the DCMM model. Mixed-ISC enjoys satisfactory performances and it performs better than most of the benchmark methods both numerically and empirically. It remains unsolved that whether there exists optimal parameter $\tau$ both theoretically and numerically. In \cite{ali2018improved}, the authors studied the existence of
an optimal value $\alpha_{opt}$ of the parameter $\alpha$ for community detection methods based on $D^{-\alpha}AD^{-\alpha}$ for community detection problem. Recall that our Mixed-ISC is designed based on $D^{-1/2}_{\tau}AD^{-1/2}_{\tau}$, we argue that whether there exist optimal $\alpha_{0}$ and $\beta_{0}$ such that mixed membership community detection method designed based on $D^{\alpha_{0}}_{\tau}A^{\beta_{0}}D^{\alpha_{0}}_{\tau}$ outperforms methods designed based on $D^{\alpha}_{\tau}A^{\beta}D^{\alpha}_{\tau}$ for any choices of $\alpha$ and $\beta$. For reasons of space, we leave studies of these problems to the future.

\section*{Acknowledgements}
The authors would like to thank Dr. Yuan Zhang (the first author of the OCCAM method \cite{OCCAM}) for sharing the SNAP ego-networks with us.
\bibliographystyle{abbrv}
\bibliography{reference}

\begin{thebibliography}{10}

\bibitem{MMSB}
E.~M. {Airoldi}, D.~M. {Blei}, S.~E. {Fienberg}, and E.~P. {Xing}.
\newblock Mixed membership stochastic blockmodels.
\newblock {\em Journal of Machine Learning Research}, 9:1981--2014, 2008.

\bibitem{ali2018improved}
H.~T. {Ali} and R.~{Couillet}.
\newblock Improved spectral community detection in large heterogeneous
  networks.
\newblock {\em Journal of Machine Learning Research}, 18(225):1--49, 2018.

\bibitem{anandkumar2014tensor}
A.~Anandkumar, R.~Ge, D.~Hsu, and S.~M. Kakade.
\newblock A tensor approach to learning mixed membership community models.
\newblock {\em The Journal of Machine Learning Research}, 15(1):2239--2312,
  2014.

\bibitem{Cai2016survey}
Q.~Cai, L.~Ma, M.~Gong, and D.~Tian.
\newblock A survey on network community detection based on evolutionary
  computation.
\newblock {\em International Journal of Bio-Inspired Computation},
  8(2):84–98, 2016.

\bibitem{CMM}
Y.~{Chen}, X.~{Li}, and J.~{Xu}.
\newblock Convexified modularity maximization for degree-corrected stochastic
  block models.
\newblock {\em Annals of Statistics}, 46(4):1573--1602, 2018.

\bibitem{Fortunato2016Community}
S.~Fortunato and D.~Hric.
\newblock Community detection in networks: A user guide.
\newblock {\em Physics Reports}, 659:1--44, 2016.

\bibitem{goldenberg2010a}
A.~{Goldenberg}, A.~X. {Zheng}, S.~E. {Fienberg}, and E.~M. {Airoldi}.
\newblock A survey of statistical network models.
\newblock {\em Foundations and Trends® in Machine Learning archive},
  2(2):129--233, 2010.

\bibitem{SBM}
P.~W. {Holland}, K.~B. {Laskey}, and S.~{Leinhardt}.
\newblock Stochastic blockmodels: First steps.
\newblock {\em Social Networks}, 5(2):109--137, 1983.

\bibitem{javed2018community}
M.~A. Javed, M.~S. Younis, S.~Latif, J.~Qadir, and A.~Baig.
\newblock Community detection in networks: A multidisciplinary review.
\newblock {\em Journal of Network and Computer Applications}, 108:87--111,
  2018.

\bibitem{ji2016coauthorship}
P.~{Ji} and J.~{Jin}.
\newblock Coauthorship and citation networks for statisticians.
\newblock {\em The Annals of Applied Statistics}, 10(4):1779--1812, 2016.

\bibitem{SCORE}
J.~{Jin}.
\newblock {Fast community detection by SCORE}.
\newblock {\em Annals of Statistics}, 43(1):57--89, 2015.

\bibitem{jin2017a}
J.~{Jin} and Z.~T. {Ke}.
\newblock A sharp lower bound for mixed-membership estimation.
\newblock {\em arXiv preprint arXiv:1709.05603}, 2017.

\bibitem{mixedSCORE}
J.~{Jin}, Z.~T. {Ke}, and S.~{Luo}.
\newblock Estimating network memberships by simplex vertex hunting.
\newblock {\em arXiv preprint arXiv:1708.07852}, 2017.

\bibitem{SCORE+}
J.~{Jin}, Z.~T. {Ke}, and S.~{Luo}.
\newblock Score+ for network community detection.
\newblock {\em arXiv preprint arXiv:1811.05927}, 2018.

\bibitem{SLIM}
B.~Jing, T.~Li, N.~Ying, and X.~Yu.
\newblock Community detection in sparse networks using the symmetrized
  laplacian inverse matrix (slim).
\newblock {\em Statistica Sinica}, 2021.

\bibitem{DCSBM}
B.~{Karrer} and M.~E.~J. {Newman}.
\newblock Stochastic blockmodels and community structure in networks.
\newblock {\em Physical Review E}, 83(1):16107, 2011.

\bibitem{karwa2016discussion}
V.~Karwa and S.~Petrovi{\'c}.
\newblock Discussion of" coauthorship and citation networks for statisticians".
\newblock {\em The Annals of Applied Statistics}, 10(4):1827--1834, 2016.

\bibitem{lei2015consistency}
J.~Lei and A.~Rinaldo.
\newblock Consistency of spectral clustering in stochastic block models.
\newblock {\em The Annals of Statistics}, 43(1):215--237, 2015.

\bibitem{leskovec2016snap}
J.~Leskovec and R.~Sosi{\v{c}}.
\newblock Snap: A general-purpose network analysis and graph-mining library.
\newblock {\em ACM Transactions on Intelligent Systems and Technology (TIST)},
  8(1):1--20, 2016.

\bibitem{li2020hierarchical}
T.~Li, L.~Lei, S.~Bhattacharyya, K.~Van~den Berge, P.~Sarkar, P.~J. Bickel, and
  E.~Levina.
\newblock Hierarchical community detection by recursive partitioning.
\newblock {\em Journal of the American Statistical Association}, pages 1--39,
  2020.

\bibitem{Tutorial}
U.~{Luxburg}.
\newblock A tutorial on spectral clustering.
\newblock {\em Statistics and Computing}, 17(4):395--416, 2007.

\bibitem{GeoNMF}
X.~{Mao}, P.~{Sarkar}, and D.~{Chakrabarti}.
\newblock On mixed memberships and symmetric nonnegative matrix factorizations.
\newblock In {\em International Conference on Machine Learning}, pages
  2324--2333, 2017.

\bibitem{SNAPego}
J.~{McAuley} and J.~{Leskovec}.
\newblock Learning to discover social circles in ego networks.
\newblock {\em In Advances in Neural Information Processing Systems},
  25:548--556, 2012.

\bibitem{mcauley2014discovering}
J.~Mcauley and J.~Leskovec.
\newblock Discovering social circles in ego networks.
\newblock {\em ACM Transactions on Knowledge Discovery from Data (TKDD)},
  8(1):1--28, 2014.

\bibitem{RSC}
T.~{Qin} and K.~{Rohe}.
\newblock {Regularized spectral clustering under the degree-corrected
  stochastic blockmodel}.
\newblock In {\em Advances in Neural Information Processing Systems 26}, pages
  3120--3128, 2013.

\bibitem{ISC}
H.~{Qing} and J.~{Wang}.
\newblock An improved spectral clustering method for community detection under
  the degree-corrected stochastic blockmodel.

\bibitem{traud2011comparing}
A.~L. {Traud}, E.~D. {Kelsic}, P.~J. {Mucha}, and M.~A. {Porter}.
\newblock {Comparing community structure to characteristics in online
  collegiate social network}.
\newblock {\em Siam Review}, 53(3):526--543, 2011.

\bibitem{weyl1912das}
H.~{Weyl}.
\newblock Das asymptotische verteilungsgesetz der eigenwerte linearer
  partieller differentialgleichungen (mit einer anwendung auf die theorie der
  hohlraumstrahlung).
\newblock {\em Mathematische Annalen}, 71(4):441--479, 1912.

\bibitem{yu2015a}
Y.~{Yu}, T.~{Wang}, and R.~J. {Samworth}.
\newblock {A useful variant of the Davis--Kahan theorem for statisticians}.
\newblock {\em Biometrika}, 102(2):315--323, 2015.

\bibitem{OCCAM}
Y.~{Zhang}, E.~{Levina}, and J.~{Zhu}.
\newblock {Detecting overlapping communities in networks using spectral
  methods}.
\newblock {\em SIAM Journal on Mathematics of Data Science}, 2(2):265--283,
  2020.

\end{thebibliography}
\appendix
\section{Error rates on Simmons and Caltech}
OCCAM, Mixed-SCORE and GeoNMF perform poor on the two weak signal networks Simmons and Caltech as shown in Table \ref{real2errors} while our Mixed-ISC significantly outperforms the three competitors on the two networks.
\begin{table}[t!]
\centering
\caption{Error rates on the two weak signal networks Simmons and Caltech.}
\label{real2errors}\par
\begin{tabular}{cccccc}
\hline
&Mixed-ISC&OCCAM&Mixed-SCORE&GeoNMF\\
\hline
Simmons&\textbf{128/1137}&268/1137&261/1137&263/1137\\
Caltech&\textbf{96/590}&192/590&174/590&208/590\\
\hline
\end{tabular}
\end{table}
\section{Proof for Mixed-ISC}
\subsection{Proof of Lemma \ref{boundL}.}
\begin{proof}
Since
\begin{align*}
\|L_{\tau}-\mathscr{L}_{\tau}\|&=\|D_{\tau}^{-1/2}AD_{\tau}^{-1/2}-\mathscr{D}^{-1/2}_{\tau}\Omega\mathscr{D}^{-1/2}_{\tau}\|\\
&=\|D_{\tau}^{-1/2}AD_{\tau}^{-1/2}-D_{\tau}^{-1/2}A\mathscr{D}_{\tau}^{-1/2}+D_{\tau}^{-1/2}A\mathscr{D}_{\tau}^{-1/2}-D_{\tau}^{-1/2}\Omega\mathscr{D}_{\tau}^{-1/2}\\
&~~~+D_{\tau}^{-1/2}\Omega\mathscr{D}_{\tau}^{-1/2}-\mathscr{D}^{-1/2}_{\tau}\Omega\mathscr{D}^{-1/2}_{\tau}\|\\
&\leq \|D_{\tau}^{-1/2}AD_{\tau}^{-1/2}-D_{\tau}^{-1/2}A\mathscr{D}_{\tau}^{-1/2}\|+\|D_{\tau}^{-1/2}A\mathscr{D}_{\tau}^{-1/2}-D_{\tau}^{-1/2}\Omega\mathscr{D}_{\tau}^{-1/2}\|\\
&~~~+\|D_{\tau}^{-1/2}\Omega\mathscr{D}_{\tau}^{-1/2}-\mathscr{D}^{-1/2}_{\tau}\Omega\mathscr{D}^{-1/2}_{\tau}\|\\
&=\|D_{\tau}^{-1/2}A(D_{\tau}^{-1/2}-\mathscr{D}_{\tau}^{-1/2})\|+\|D_{\tau}^{-1/2}(A-\Omega)\mathscr{D}_{\tau}^{-1/2}\|\\
&~~~+\|(D_{\tau}^{-1/2}-\mathscr{D}_{\tau}^{-1/2})\Omega\mathscr{D}_{\tau}^{-1/2}\|\\
&=\|D_{\tau}^{-1/2}D^{1/2}D^{-1/2}AD^{-1/2}D^{1/2}(D_{\tau}^{-1/2}-\mathscr{D}_{\tau}^{-1/2})\|+\|D_{\tau}^{-1/2}(A-\Omega)\mathscr{D}_{\tau}^{-1/2}\|\\
&~~~+\|(D_{\tau}^{-1/2}-\mathscr{D}_{\tau}^{-1/2})\mathscr{D}^{1/2}\mathscr{D}^{-1/2}\Omega\mathscr{D}^{-1/2}\mathscr{D}^{1/2}\mathscr{D}_{\tau}^{-1/2}\|\\
&\leq\|D_{\tau}^{-1/2}D^{1/2}\|\|D^{-1/2}AD^{-1/2}\|\|D^{1/2}\|\|D_{\tau}^{-1/2}-\mathscr{D}_{\tau}^{-1/2}\|+\|D_{\tau}^{-1/2}(A-\Omega)\mathscr{D}_{\tau}^{-1/2}\|\\
&~~~+\|D_{\tau}^{-1/2}-\mathscr{D}_{\tau}^{-1/2}\|\|\mathscr{D}^{1/2}\|\|\mathscr{D}^{-1/2}\Omega\mathscr{D}^{-1/2}\|\|\mathscr{D}^{1/2}\mathscr{D}_{\tau}^{-1/2}\|\\
&\leq\|D_{\tau}^{-1/2}D^{1/2}\|\|D^{1/2}\|\|D_{\tau}^{-1/2}-\mathscr{D}_{\tau}^{-1/2}\|+\|D_{\tau}^{-1/2}(A-\Omega)\mathscr{D}_{\tau}^{-1/2}\|\\
&~~~+\|D_{\tau}^{-1/2}-\mathscr{D}_{\tau}^{-1/2}\|\|\mathscr{D}^{1/2}\|\|\mathscr{D}^{1/2}\mathscr{D}_{\tau}^{-1/2}\|\\
&=\frac{\Delta_{\mathrm{max}}}{\sqrt{\tau+\Delta_{\mathrm{max}}}}\|D_{\tau}^{-1/2}-\mathscr{D}_{\tau}^{-1/2}\|+\|D_{\tau}^{-1/2}(A-\Omega)\mathscr{D}_{\tau}^{-1/2}\|\\
&~~~+\frac{\delta_{\mathrm{max}}}{\sqrt{\tau+\delta_{\mathrm{max}}}}\|D_{\tau}^{-1/2}-\mathscr{D}_{\tau}^{-1/2}\|\\
&=(\frac{\Delta_{\mathrm{max}}}{\sqrt{\tau+\Delta_{\mathrm{max}}}}+\frac{\delta_{\mathrm{max}}}{\sqrt{\tau+\delta_{\mathrm{max}}}})\|D_{\tau}^{-1/2}-\mathscr{D}_{\tau}^{-1/2}\|+\|D_{\tau}^{-1/2}(A-\Omega)\mathscr{D}_{\tau}^{-1/2}\|\\
&\leq (\frac{\Delta_{\mathrm{max}}}{\sqrt{\tau+\Delta_{\mathrm{max}}}}+\frac{\delta_{\mathrm{max}}}{\sqrt{\tau+\delta_{\mathrm{max}}}})\|D_{\tau}^{-1/2}-\mathscr{D}_{\tau}^{-1/2}\|+\|D_{\tau}^{-1/2}\|\|A-\Omega\|\|\mathscr{D}_{\tau}^{-1/2}\|\\
&\leq (\frac{\Delta_{\mathrm{max}}}{\sqrt{\tau+\Delta_{\mathrm{max}}}}+\frac{\delta_{\mathrm{max}}}{\sqrt{\tau+\delta_{\mathrm{max}}}})\|D_{\tau}^{-1/2}-\mathscr{D}_{\tau}^{-1/2}\|+\frac{1}{\sqrt{(\tau+\Delta_{\mathrm{min}})(\tau+\delta_{\mathrm{min}})}}\|A-\Omega\|.
\end{align*}
Therefore, to bound $\|L_{\tau}-\mathscr{L}_{\tau}\|$, now we need to bound $\|D^{-1/2}_{\tau}-\mathscr{D}^{-1/2}_{\tau}\|$ and $\|A-\Omega\|$.

When assumptions (\ref{a1}) and (\ref{a2}) hold, by Lemma 3.2 of \cite{mixedSCORE}, we have: with probability $1-o(n^{-3})$,
\begin{align}\label{diffAOmega}
\|A-\Omega\|\leq C\sqrt{\mathrm{log}(n)\theta_{\mathrm{max}}\|\theta\|_{1}},
\end{align}
 where $C$ is a positive constant number.

 Next we bound $\|D_{\tau}^{-1/2}-\mathscr{D}_{\tau}^{-1/2}\|$,
\begin{align*}
\|D_{\tau}^{-1/2}-\mathscr{D}_{\tau}^{-1/2}\|&=\|(\mathscr{D}_{\tau}^{-1/2}D^{1/2}_{\tau}-I)D_{\tau}^{-1/2}\|\\
&\leq \|\mathscr{D}_{\tau}^{-1/2}D^{1/2}_{\tau}-I\|\|D_{\tau}^{-1/2}\|\\
&=\frac{\|I-\mathscr{D}_{\tau}^{-1/2}D^{1/2}_{\tau}\|}{\sqrt{\tau+\Delta_{\mathrm{min}}}}.
\end{align*}
We obtain the bound of $\|I-\mathscr{D}_{\tau}^{-1/2}D^{1/2}_{\tau}\|$ by below procedure: since $\|I-\mathscr{D}_{\tau}^{-1/2}D^{1/2}_{\tau}\|=\underset{1\leq i\leq n}{\mathrm{max}}|1-\sqrt{\frac{D_{\tau}(i,i)}{\mathscr{D}_{\tau}(i,i)}}|$, set $i^{*}=\mathrm{arg~max}_{1\leq i\leq n}|1-\sqrt{\frac{D_{\tau}(i,i)}{\mathscr{D}_{\tau}(i,i)}}|$. If $1\geq \frac{D_{\tau}(i^{*},i^{*})}{\mathscr{D}_{\tau}(i^{*},i^{*})}$, we have
\begin{align*}
\|I-\mathscr{D}_{\tau}^{-1/2}D^{1/2}_{\tau}\|=1-\sqrt{\frac{D_{\tau}(i^{*},i^{*})}{\mathscr{D}_{\tau}(i^{*},i^{*})}}\leq 1-\sqrt{\frac{\tau+\Delta_{\mathrm{min}}}{\tau+\delta_{\mathrm{max}}}}.
\end{align*}
 If $1\leq \frac{D_{\tau}(i^{*},i^{*})}{\mathscr{D}_{\tau}(i^{*},i^{*})}$, we have
\begin{align*}
\|I-\mathscr{D}_{\tau}^{-1/2}D^{1/2}_{\tau}\|=\sqrt{\frac{D_{\tau}(i^{*},i^{*})}{\mathscr{D}_{\tau}(i^{*},i^{*})}}-1\leq \sqrt{\frac{\tau+\Delta_{\mathrm{max}}}{\tau+\delta_{\mathrm{min}}}}-1.
\end{align*}
Therefore, we have
\begin{align*}
\|I-\mathscr{D}_{\tau}^{-1/2}D^{1/2}_{\tau}\|\leq \mathrm{max}\{1-\sqrt{\frac{\tau+\Delta_{\mathrm{min}}}{\tau+\delta_{\mathrm{max}}}}, \sqrt{\frac{\tau+\Delta_{\mathrm{max}}}{\tau+\delta_{\mathrm{min}}}}-1\},
\end{align*}
which gives
\begin{align}\label{diffD}
\|D_{\tau}^{-1/2}-\mathscr{D}_{\tau}^{-1/2}\|\leq\frac{\mathrm{max}\{1-\sqrt{\frac{\tau+\Delta_{\mathrm{min}}}{\tau+\delta_{\mathrm{max}}}}, \sqrt{\frac{\tau+\Delta_{\mathrm{max}}}{\tau+\delta_{\mathrm{min}}}}-1\}}{\sqrt{\tau+\Delta_{\mathrm{min}}}}.
\end{align}
Combine (\ref{diffAOmega}) and (\ref{diffD}), we obtain the bound of $\|L_{\tau}-\mathscr{L}_{\tau}\|$ as below
\begin{align*}
\|L_{\tau}-\mathscr{L}_{\tau}\|&\leq (\frac{\Delta_{\mathrm{max}}}{\sqrt{\tau+\Delta_{\mathrm{max}}}}+\frac{\delta_{\mathrm{max}}}{\sqrt{\tau+\delta_{\mathrm{max}}}})\|D_{\tau}^{-1/2}-\mathscr{D}_{\tau}^{-1/2}\|+\frac{1}{\sqrt{(\tau+\Delta_{\mathrm{min}})(\tau+\delta_{\mathrm{min}})}}\|A-\Omega\|\\
&\leq (\frac{\Delta_{\mathrm{max}}}{\sqrt{\tau+\Delta_{\mathrm{max}}}}+\frac{\delta_{\mathrm{max}}}{\sqrt{\tau+\delta_{\mathrm{max}}}})\frac{\mathrm{max}\{1-\sqrt{\frac{\tau+\Delta_{\mathrm{min}}}{\tau+\delta_{\mathrm{max}}}}, \sqrt{\frac{\tau+\Delta_{\mathrm{max}}}{\tau+\delta_{\mathrm{min}}}}-1\}}{\sqrt{\tau+\Delta_{\mathrm{min}}}}\\
&~~~+\frac{C\sqrt{\mathrm{log}(n)\theta_{\mathrm{max}}\|\theta\|_{1}}}{\sqrt{(\tau+\Delta_{\mathrm{min}})(\tau+\delta_{\mathrm{min}})}}.
\end{align*}
\end{proof}
\subsection{Proof of Lemma \ref{boundXstar}.}
\begin{proof}
For notation convenience, we define eight matrices as below
	\begin{align*}
	\hat{V}_{K+1}&=[\hat{\eta}_{1}, \hat{\eta}_{2}, \ldots, \hat{\eta}_{K}, \hat{\eta}_{K+1}], ~~\hat{V}_{K}=[\hat{\eta}_{1}, \hat{\eta}_{2}, \ldots, \hat{\eta}_{K}],\\
	V_{K+1}&=[\eta_{1}, \eta_{2}, \ldots, \eta_{K}, \eta_{K+1}], ~~V_{K}=[\eta_{1}, \eta_{2}, \ldots, \eta_{K}],\\
	\hat{E}_{K+1}&=\mathrm{diag}(\hat{\lambda}_{1}, \hat{\lambda}_{2}, \ldots, \hat{\lambda}_{K}, \hat{\lambda}_{K+1}), ~~\hat{E}_{K}=\mathrm{diag}(\hat{\lambda}_{1}, \hat{\lambda}_{2}, \ldots, \hat{\lambda}_{K}),\\
	E_{K+1}&=\mathrm{diag}(\lambda_{1}, \lambda_{2}, \ldots, \lambda_{K}, 0), ~~E_{K}=\mathrm{diag}(\lambda_{1}, \lambda_{2}, \ldots, \lambda_{K}).
	\end{align*}
	Then we have
	\begin{align*}
	\hat{X}=\hat{V}_{K+1}\hat{E}_{K+1},~~X=V_{K+1}E_{K+1},
	\end{align*}
	which gives that
	\begin{align}
	\|\hat{X}-X\|_{F}&=\|\hat{V}_{K+1}\hat{E}_{K+1}-V_{K+1}E_{K+1}\|_{F}\notag\\
	&=\|\hat{V}_{K+1}\hat{E}_{K+1}-\hat{V}_{K+1}E_{K+1}+\hat{V}_{K+1}E_{K+1}-V_{K+1}E_{K+1}\|_{F}\notag\\
	&\leq \|\hat{V}_{K+1}\hat{E}_{K+1}-\hat{V}_{K+1}E_{K+1}\|_{F}+\|\hat{V}_{K+1}E_{K+1}-V_{K+1}E_{K+1}\|_{F}\notag\\
	&=\|\hat{V}_{K+1}(\hat{E}_{K+1}-E_{K+1})\|_{F}+\|(\hat{V}_{K+1}-V_{K+1})E_{K+1}\|_{F}\notag\\
	&=\|\hat{E}_{K+1}-E_{K+1}\|_{F}+\|(\hat{V}_{K+1}-V_{K+1})E_{K+1}\|_{F}\label{A3}\\
	&=\|\hat{E}_{K+1}-E_{K+1}\|_{F}+\|(\hat{V}_{K}-V_{K})E_{K}\|_{F}\notag\\
	&\leq\|\hat{E}_{K+1}-E_{K+1}\|_{F}+\|\hat{V}_{K}-V_{K}\|_{F}\|E_{K}\|_{F}\notag\\
	&=\sqrt{\|\hat{E}_{K}-E_{K}\|^{2}_{F}+\hat{\lambda}^{2}_{K+1}}+\|\hat{V}_{K}-V_{K}\|_{F}\|E_{K}\|_{F}\label{A4}.
	\end{align}
	Note that (\ref{A3}) holds because we only consider eigenvectors with unit-norms. Next we bound the four terms: $\|\hat{E}_{K}-E_{K}\|^{2}_{F}+\hat{\lambda}^{2}_{K+1}, \|\hat{V}_{K}-V_{K}\|_{F}$ and $\|E_{K}\|_{F}$.

By Weyl's inequality \cite{weyl1912das}, with probability at least $1-o(n^{-3})$, we have
\begin{align}\label{A5}
\|\hat{E}_{K}-E_{K}\|^{2}_{F}+\hat{\lambda}^{2}_{K+1}\leq (K+1)\|L_{\tau}-\mathscr{L}_{\tau}\|^{2}\leq (K+1)err^{2}_{n}.
\end{align}
We obtain the bound of $\|\hat{V}_{K}-V_{K}\|$ by combining a variant of the Davis-Kahan theorem \cite{yu2015a} and Lemma \ref{boundL}. The variant of the Davis-Kahan theorem is given below
\begin{thm}\label{VariantDK}
Let $\Sigma, \hat{\Sigma}\in \mathcal{R}^{p\times p}$ be symmetric with eigenvalues $\lambda_{1}\geq \ldots\geq \lambda_{p}$ and $\hat{\lambda}_{1}\geq\ldots\geq \hat{\lambda}_{p}$. Fix $1\leq r\leq s\leq p$ and assum that $\mathrm{min}(\lambda_{r-1}-\lambda_{r},\lambda_{s}-\lambda_{s+1})>0$, where we define $\lambda_{0}=\infty$ and $\lambda_{p+1}=\infty$. Let $d=s-r+1$, and let $V=(v_{r}, v_{r+1},\ldots, v_{s})\in \mathcal{R}^{p\times d}$ and $\hat{V}=(\hat{v}_{r}, \hat{v}_{r+1},\ldots, \hat{v}_{s})\in \mathcal{R}^{p\times d}$ have orthonormal columns satisfying $\Sigma v_{j}=\lambda_{j}v_{j}$ and $\hat{\Sigma}\hat{v}_{j}=\hat{\lambda}_{j}\hat{v}_{j}$ for $j=r,r+1, \ldots, s$. Then there exists an orthogonal matrix $\mathcal{O}\in \mathcal{R}^{d\times d}$ such that
\begin{align*}
\|\hat{V}-V\mathcal{O}\|_{F}\leq \frac{2^{3/2}\mathrm{min}(d^{1/2}\|\hat{\Sigma}-\Sigma\|, \|\hat{\Sigma}-\Sigma\|_{F} )}{\mathrm{min}(\lambda_{r-1}-\lambda_{r}, \lambda_{s}-\lambda_{s+1})}.
\end{align*}
\end{thm}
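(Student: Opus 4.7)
The plan is to reproduce the Yu–Wang–Samworth sharpening of the Davis–Kahan $\sin\Theta$ theorem. The argument has three ingredients: an optimal rotation chosen by the polar decomposition, a passage from $\|\hat V - V\mathcal O\|_F$ to $\|\sin\Theta\|_F$ via a principal-angle identity, and a two-case split (based on Weyl's inequality) that allows the denominator to involve only the spectrum of $\Sigma$.

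First, I would select $\mathcal O$ via the SVD $\hat V^{T} V = P\Sigma_{0} Q^{T}$, whose singular values are precisely $\cos\theta_{1}\geq\cdots\geq\cos\theta_{d}$ for $\theta_{i}$ the principal angles between the column spaces of $V$ and $\hat V$. Setting $\mathcal O = QP^{T}$ and expanding, one finds
\begin{align*}
\|\hat V - V\mathcal O\|_{F}^{2} = 2d - 2\operatorname{tr}(\hat V^{T}V\mathcal O) = 2\sum_{i=1}^{d}(1-\cos\theta_{i}),
\end{align*}
and a standard trace argument shows that this $\mathcal O$ minimizes $\|\hat V - V\mathcal O\|_{F}$ over the orthogonal group. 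Using $1-\cos\theta = 2\sin^{2}(\theta/2)$ together with the elementary inequality $\sin^{2}(\theta/2)\leq \tfrac{1}{2}\sin^{2}\theta$ on $[0,\pi/2]$ (the range of principal angles), this yields
\begin{align*}
\|\hat V - V\mathcal O\|_{F}^{2} \leq 2\sum_{i=1}^{d}\sin^{2}\theta_{i} = 2\|\sin\Theta\|_{F}^{2},
\end{align*}
reducing the problem to bounding $\|\sin\Theta\|_{F}$.

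Next, write $\delta := \min(\lambda_{r-1}-\lambda_{r},\,\lambda_{s}-\lambda_{s+1})$ and split on the size of $\|\hat\Sigma-\Sigma\|$. In the small-noise case $\|\hat\Sigma-\Sigma\|<\delta/2$, Weyl's inequality gives $|\hat\lambda_{i}-\lambda_{i}|<\delta/2$ for all $i$, so the block $\hat\lambda_{r},\ldots,\hat\lambda_{s}$ is separated from the rest of the spectrum of $\hat\Sigma$ by at least $\delta/2$. The classical symmetric Davis–Kahan $\sin\Theta$ theorem (e.g.\ the Stewart–Sun form) then gives
\begin{align*}
\|\sin\Theta\|_{F} \leq \frac{\|(\hat\Sigma-\Sigma)V\|_{F}}{\delta/2},
\end{align*}
and the submultiplicative bounds $\|(\hat\Sigma-\Sigma)V\|_{F}\leq \|\hat\Sigma-\Sigma\|\,\|V\|_{F} = \sqrt{d}\,\|\hat\Sigma-\Sigma\|$ and $\|(\hat\Sigma-\Sigma)V\|_{F}\leq \|\hat\Sigma-\Sigma\|_{F}\,\|V\|=\|\hat\Sigma-\Sigma\|_{F}$ produce the $\min(\cdot)$ factor. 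Combining this with the previous paragraph's $\sqrt{2}$ yields the constant $2^{3/2}$. In the large-noise case $\|\hat\Sigma-\Sigma\|\geq\delta/2$, the target RHS is at least $2^{3/2}\sqrt{d}/2 = \sqrt{2d}$, while the principal-angle representation forces $\|\hat V-V\mathcal O\|_{F}^{2}\leq 2d$ (since $\cos\theta_{i}\geq 0$), so the inequality holds trivially.

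The main obstacle is the final point above: the classical Davis–Kahan bound naturally uses a gap between the spectra of \emph{both} $\Sigma$ and $\hat\Sigma$, whereas the statement requires a gap depending only on $\Sigma$. Absorbing this discrepancy is exactly the role of the Weyl two-case argument, and it is the reason the constant is $2^{3/2}$ rather than the $\sqrt{2}$ one would otherwise read off from Steps 1–2.
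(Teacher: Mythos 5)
Your Steps 1--2 are sound, and they are the standard route: choosing $\mathcal{O}$ from the SVD of $\hat{V}'V$ (orthogonal Procrustes), the identity $\|\hat{V}-V\mathcal{O}\|_{F}^{2}=2\sum_{i}(1-\cos\theta_{i})$, the elementary bound $2(1-\cos\theta)\leq 2\sin^{2}\theta$ on $[0,\pi/2]$ giving the factor $\sqrt{2}$, and, in the small-noise case $\|\hat{\Sigma}-\Sigma\|<\delta/2$, Weyl's inequality plus the classical $\sin\Theta$ theorem with gap $\delta/2$ and $\|(\hat{\Sigma}-\Sigma)V\|_{F}\leq\mathrm{min}(d^{1/2}\|\hat{\Sigma}-\Sigma\|,\|\hat{\Sigma}-\Sigma\|_{F})$. (One cosmetic imprecision: for the Stewart--Sun form with residual $(\hat{\Sigma}-\Sigma)V$, the relevant separation is between $\{\lambda_{j}\}_{j=r}^{s}$ and $\{\hat{\lambda}_{k}\}_{k\notin\{r,\ldots,s\}}$, not within the spectrum of $\hat{\Sigma}$ alone; under $\|\hat{\Sigma}-\Sigma\|<\delta/2$ Weyl delivers that gap too, so this is repairable in a line.) Note also that the paper itself contains no proof of this theorem --- it is imported verbatim from \cite{yu2015a} --- so your attempt is being measured against the Yu--Wang--Samworth argument you are reconstructing.

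The genuine gap is in your large-noise case, and specifically in the Frobenius branch of the minimum. From $\|\hat{\Sigma}-\Sigma\|\geq\delta/2$ you conclude that the right-hand side is at least $2^{3/2}d^{1/2}/2=\sqrt{2d}$, but that requires $\mathrm{min}(d^{1/2}\|E\|,\|E\|_{F})\geq d^{1/2}\delta/2$ with $E=\hat{\Sigma}-\Sigma$, i.e.\ $\|E\|_{F}\geq d^{1/2}\delta/2$; the case hypothesis only gives $\|E\|_{F}\geq\|E\|\geq\delta/2$. If $d\geq 2$ and $E$ is, say, rank one with $\delta/2\leq\|E\|_{F}=\|E\|<d^{1/2}\delta/2$, then the minimum is attained by $\|E\|_{F}$ and the right-hand side equals $2^{3/2}\|E\|_{F}/\delta$, which can be as small as $\sqrt{2}$, while your trivial bound on the left-hand side is only $\sqrt{2d}$. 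So nothing is trivial in exactly the regime of strong but low-rank noise, which is where the Frobenius form is delicate: there is no usable spectral separation, yet the claim must still hold. A workable patch is an eigenvalue-displacement count: every $\hat{\lambda}_{k}$ with $k\notin\{r,\ldots,s\}$ that enters the $\delta/2$-neighborhood of $[\lambda_{s},\lambda_{r}]$ must satisfy $|\hat{\lambda}_{k}-\lambda_{k}|>\delta/2$, so by Mirsky's (Hoffman--Wielandt) inequality $\sum_{i}(\hat{\lambda}_{i}-\lambda_{i})^{2}\leq\|E\|_{F}^{2}$ there are at most $4\|E\|_{F}^{2}/\delta^{2}$ such indices, each contributing at most $1$ to $\|\sin\Theta\|_{F}^{2}$, while the well-separated pairs are controlled through the expansion
\begin{align*}
\|E\|_{F}^{2}\geq\sum_{j,k}(\hat{\lambda}_{k}-\lambda_{j})^{2}\langle\hat{v}_{k},v_{j}\rangle^{2}.
\end{align*}
This closes the gap but with a worse constant; recovering the stated $2^{3/2}$ requires the sharper bookkeeping of \cite{yu2015a} itself, so as written your proposal proves a weaker statement than the one claimed.
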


Set $\hat{\Sigma}=L_{\tau}, \Sigma=\mathscr{L}_{\tau}, r=1, s=K$, then $d=K, \hat{V}=\hat{V}_{K}, V=V_{K}$. By assumption (\ref{a3}), combine with Lemma \ref{boundL}, we have
\begin{align}\label{A6}
\|\hat{V}_{K}-V_{K}\|_{F}\leq\frac{\sqrt{8K}\|L_{\tau}-\mathscr{L}_{\tau}\|}{\lambda_{K}}\leq \frac{\sqrt{8K}err_{n}}{\lambda_{K}},
\end{align}
where we do not consider the orthogonal matrix since $\hat{V}_{K}$ and $V_{K}$ are arranged by the leading $K$ unit-norm eigenvectors of $L_{\tau}$ and $\mathscr{L}_{\tau}$, respectively.

We bound $\|E_{K}\|_{F}$ by
\begin{align*}
\|E_{K}\|_{F}\leq \sqrt{K}\lambda_{1}.
\end{align*}
Meanwhile, since $\|\mathscr{D}^{-1/2}\Omega\mathscr{D}^{-1/2}\|=1$ and  $\lambda_{1}=\|\mathscr{L}_{\tau}\|$, we have
\begin{align*}
\lambda_{1}=\|\mathscr{L}_{\tau}\|&=\|\mathscr{D}_{\tau}^{-1/2}\Omega\mathscr{D}^{-1/2}_{\tau}\|\\
&=\|\mathscr{D}^{-1/2}_{\tau}\mathscr{D}^{1/2}\mathscr{D}^{-1/2}\Omega\mathscr{D}^{-1/2}\mathscr{D}^{1/2}\mathscr{D}^{-1/2}_{\tau}\|\\
&\leq\|\mathscr{D}^{-1/2}_{\tau}\mathscr{D}^{1/2}\|\|\mathscr{D}^{-1/2}\Omega\mathscr{D}^{-1/2}\|\|\mathscr{D}^{1/2}\mathscr{D}^{-1/2}_{\tau}\|\\
&=\|\mathscr{D}^{-1/2}_{\tau}\mathscr{D}^{1/2}\|^{2}\\
&=\frac{\delta_{\mathrm{max}}}{\tau+\delta_{\mathrm{max}}},
\end{align*}
which gives that
\begin{align}\label{A7}
\|E_{K}\|_{F}\leq \frac{\delta_{\mathrm{max}}\sqrt{K}}{\tau+\delta_{\mathrm{max}}}.
\end{align}
Combine (\ref{A5})-(\ref{A7}), we have
\begin{align*}
\|\hat{X}-X\|_{F}\leq (\sqrt{K+1}+\frac{K\delta_{\mathrm{max}}}{\lambda_{K}(\tau+\delta_{\mathrm{max}})})err_{n}.
\end{align*}
Since for any $i$, we have
\begin{align*}
\|\hat{X}^{*}_{i}-X^{*}_{i}\|_{2}\leq \frac{\|\hat{X}_{i}-X_{i}\|}{\mathrm{min}\{\|\hat{X}_{i}\|_{2},\|X_{i}\|_{2}\}},
\end{align*}
we have
\begin{align*}
\|\hat{X}^{*}-X^{*}\|_{F}\leq \frac{\|\hat{X}-X\|_{F}}{m}\leq \frac{1}{m}(\sqrt{K+1}+\frac{K\delta_{\mathrm{max}}}{\lambda_{K}(\tau+\delta_{\mathrm{max}})})err_{n}.
\end{align*}
\end{proof}
\subsection{Proof of Lemma \ref{boundV}.}
\begin{proof}
To prove this lemma, we follow similar procedure as Lemma SM1.5. in \cite{OCCAM}. Because all rows of $\hat{X}^{*}$ and $X^{*}$ have unit norms, the sample space of $\mathcal{F}$ is uniformly bounded in the unit ball in $\mathcal{R}^{K+1}$. Then we show that all cluster vertexes estimated by K-medians fall in the $l_{2}$ ball centered at origin with radius 3, which we denote as $\mathscr{R}$. Otherwise, if there exists an estimated cluster center $v$ outside $\mathscr{R}$, it is at least distance 2 away from any point assigned to its cluster.  Therefore, moving $v$ to any point inside the unit ball yields an improvement in the loss function since any two points inside the unit ball are at most distance 2 away from each other.

We first show the uniform convergence of $\mathcal{L}_{n}(\hat{X}^{*};U)$ to $\mathcal{L}(\mathcal{F};U)$ and then show the optimum of $\mathcal{L}_{n}(\hat{X}^{*};U)$ is close to that of $\mathcal{L}(\mathcal{F};U)$. Let $\hat{V}:=\mathrm{arg~min~}\mathcal{L}_{n}(\hat{X}^{*};U)$. We start with showing that
\begin{align}\label{AE1}
\underset{U\subset\mathscr{R}}{\mathrm{sup}}|\mathcal{L}_{n}(\hat{X};U)-\mathcal{L}_{n}(X^{*};U)|\leq\frac{\|\hat{X}^{*}-X^{*}\|_{F}}{\sqrt{n}}.
\end{align}
To prove (\ref{AE1}), for each $i$, let $\hat{u}$ and $u$ be (possibly identical) rows in $U$ that are close to $\hat{X}^{*}_{i}$ and $X^{*}_{i}$ respectively in $l_{2}$ norm. We have
\begin{align*}
\|X^{*}_{i}-u\|_{F}-\|\hat{X}^{*}_{i}-\hat{u}\|_{F}\leq \|X^{*}_{i}-\hat{X}^{*}_{i}\|_{F}
\end{align*}
and similarly, $\|\hat{X}^{*}_{i}-\hat{u}\|_{F}-\|X^{*}_{i}-u\|_{F}\leq \|X^{*}_{i}-\hat{X}^{*}_{i}\|_{F}$. Hence $\|\hat{X}^{*}_{i}-\hat{u}\|_{F}-\|X^{*}_{i}-u\|_{F}\leq \|X^{*}_{i}-\hat{X}^{*}_{i}\|_{F}$. Combining this inequalities for all rows, we have
\begin{align}
&|\mathcal{L}_{n}(\hat{X}^{*};U)-\mathcal{L}_{n}(X^{*};U)|=|\frac{1}{n}\sum_{i=1}^{n}(\|\hat{X}^{*}_{i}-\hat{u}\|_{F}-\|X^{*}_{i}-u\|_{F})|\notag\\
&\leq \sqrt{\frac{1}{n}\sum_{i=1}^{n}\|\hat{X}^{*}_{i}-X^{*}_{i}\|^{2}_{F}}=\frac{1}{\sqrt{n}}\|\hat{X}^{*}-X^{*}\|_{F}.\label{AE2}
\end{align}
Then since that (\ref{AE2})holds for any $U$, the uniform bound (\ref{AE1}) follows.

For simplicity, we introduce the notation ``$U\subset \mathscr{R}$'', by which we mean that the rows of a matrix $U$ belong to the set $\mathscr{R}$. We now derive the bound for $\mathrm{sup}_{U\subset\mathscr{R}}|\mathcal{L}_{n}(X^{*};U)-\mathcal{L}(\mathcal{F};U)|$, which, without taking the supremum, is easily bounded by Bernstein's inequality. To tackle the uniform bound, we employ an $\epsilon$-net. there exists an $\epsilon$-net $\mathscr{R}_{\epsilon}$, with size $|\mathscr{R}_{\epsilon}|\leq C_{\mathscr{R}}\frac{(K+1)}{\epsilon}\mathrm{log}(\frac{(K+1)}{\epsilon})$, where $C_{\mathscr{R}}$ is a constant. For any $\tilde{U}\in \mathscr{R}_{\epsilon}, \tilde{U}\in \mathcal{R}^{K\times (K+1)}$, notice that $\mathrm{min}_{1\leq k\leq K}\|X^{*}_{i}-\tilde{U}_{k}\|_{F}$ is a random variable uniformly bounded by 6 with expectation $\mathcal{L}(\mathcal{F};\tilde{U})$ for each $i$. therefore, by Bernstein's inequality, for any $\varrho>0$, we have
\begin{align}\label{AE3}
\mathrm{P}(|\mathcal{L}_{n}(X^{*};\tilde{U})-\mathcal{L}(\mathcal{F};\tilde{U})|>\varrho)\leq \mathrm{exp}(-\frac{n\varrho^{2}}{8R^{2}_{m}+4R_{m}\varrho/3})=\mathrm{exp}(-\frac{n\varrho^{2}}{72+4\varrho}),
\end{align}
where $R_{m}$ is a universal upper bound on the radius of the sample space of latent node positions (rows of all true and estimated $X$).

The number of all such $\tilde{U}\subset\mathscr{R}_{\epsilon}$ is bounded by
\begin{align*}
|\{\tilde{U}: \tilde{U}\subset\mathscr{R}_{\epsilon}\}|=\bigl(\begin{smallmatrix}
    C_{\mathscr{R}}\frac{(K+1)}{\epsilon}\mathrm{log}(\frac{(K+1)}{\epsilon})\\
    K+1
  \end{smallmatrix}\bigr)\leq (C_{\mathscr{R}}\frac{(K+1)}{\epsilon}\mathrm{log}(\frac{(K+1)}{\epsilon}))^{K+1}.
\end{align*}
By the uniform bound, we have
\begin{align}\label{AE4}
  \mathrm{P}(\underset{\tilde{U}\subset\mathscr{R}_{\epsilon}}{\mathrm{sup}}|\mathcal{L}(X^{*};\tilde{U})-\mathcal{L}(\mathcal{F};\tilde{U})|>\varrho)<(C_{\mathscr{R}}\frac{(K+1)}{\epsilon}\mathrm{log}(\frac{(K+1)}{\epsilon}))^{K+1}\mathrm{exp}(-\frac{n\varrho^{2}}{72+4\varrho}).
\end{align}
The above shows the uniform convergence of the loss functions for $\tilde{U}$ from the $\epsilon$-net $\mathscr{R}_{\epsilon}$. We then expand it to the uniform convergence of all $U\subset \mathscr{R}$. For any $U\subset\mathscr{R}$, there exists $\tilde{U}\subset\mathscr{R}_{\epsilon}$ such that both $\mathcal{L}_{n}(\cdot; U)$ and $\mathcal{L}(\cdot; U)$ can be well approximated by $\mathcal{L}_{n}(\cdot; \tilde{U})$ and $\mathcal{L}(\cdot; \tilde{U})$, respectively. To emphasize the dependence of $\tilde{U}$ on $U$, we write $\tilde{U}=\tilde{U}(U)$. Formally, we now prove the following
\begin{align}
&\underset{U\subset\mathscr{R}}{\mathrm{sup}}|\mathcal{L}_{n}(X^{*};U)-\mathcal{L}_{n}(X^{*};\tilde{U}(U))|\leq \epsilon,\label{AE5}\\
&\underset{U\subset\mathscr{R}}{\mathrm{sup}}|\mathcal{L}(\mathcal{F};U)-\mathcal{L}(\mathcal{F};\tilde{U}(U))|\leq\epsilon\label{AE6}.
\end{align}
To prove (\ref{AE5}) and (\ref{AE6}), for any $U\subset\mathscr{R}$, let $\tilde{U}\subset\mathscr{R}_{\epsilon}$ be a matrix formed by concatenating the points in $\mathscr{R}_{\epsilon}$ that best approximate the rows in $U$. Notice that $\tilde{U}$ formed such way may contain less than $K$ rows, in this case, we arbitrarily pick points in $\mathscr{R}_{\epsilon}$ to enlarge $\tilde{U}$ to $K$ rows. For any $x\in \mathcal{R}^{K+1}$, let $u_{0}$ be the best approximation to $x$ among the rows of $U$ and $\tilde{u}_{0}$ be the best approximation to $u_{0}$ among the rows of $\tilde{U}$; let $\tilde{u}_{1}$ be the best approximation to $x$ among the rows of $\tilde{U}$ and let $u_{1}$ be the point among the rows of $U$ that is best approximated by $\tilde{u}_{1}$. Since $\|x-u_{0}\|_{2}\leq \|x-u_{1}\|_{2}\leq \|x-\tilde{u}_{1}\|_{2}+\|u_{1}-\tilde{u}_{1}\|_{2}\leq \|x-\tilde{u}_{1}\|_{2}+\epsilon$, and similarly, $\|x-\tilde{u}_{1}\|_{2}\leq \|x-\tilde{u}_{0}\|_{2}\leq \|x-u_{0}\|_{2}+\epsilon$, we have
\begin{align}\label{AE7}
|\mathrm{min}_{1\leq k\leq K}\|x-U_{k}\|_{2}-\mathrm{min}_{1\leq k\leq K}\|x-\tilde{U}_{k}\|_{2}|=|\|x-u_{0}\|_{2}-\|x-\tilde{u}_{1}\|_{2}|\leq \epsilon
\end{align}
which implies (\ref{AE5}) and (\ref{AE6}).

Combining (\ref{AE1}), (\ref{AE4}), (\ref{AE5}) and (\ref{AE6}), we have shown that with probability $\mathrm{P}_{2}(n,\epsilon,\varrho)=1-(C_{\mathscr{R}}\frac{(K+1)}{\epsilon}\mathrm{log}(\frac{(K+1)}{\epsilon}))\mathrm{exp}(-\frac{n\varrho^{2}}{72+4\varrho})$,
\begin{align}
&\underset{U\subset\mathscr{R}}{\mathrm{sup}}|\mathcal{L}(\hat{X}^{*};U)-\mathcal{L}(\mathcal{F};U)|\notag\\
&\leq \underset{U\subset\mathscr{R}}{\mathrm{sup}}|\mathcal{L}(\hat{X}^{*};U)-\mathcal{L}(X^{*};U)|+\underset{U\subset\mathscr{R}}{\mathrm{sup}}|\mathcal{L}(X^{*};U)-\mathcal{L}(X^{*};\tilde{U}(U))|\notag\\
&~~~+\underset{\tilde{U}\subset\mathscr{R}}{\mathrm{sup}}|\mathcal{L}(X^{*};\tilde{U})-\mathcal{L}(\mathcal{F};\tilde{U})|+\underset{V\subset\mathscr{R}}{\mathrm{sup}}|\mathcal{L}(\mathcal{F};U)-\mathcal{L}(\mathcal{F};\tilde{U}(U))|\notag\\
&\leq \frac{\|\hat{X}^{*}-X^{*}\|_{F}}{\sqrt{n}}+2\epsilon+\varrho. \label{AE8}
\end{align}
Finally, we use (\ref{AE8}) to bound $\|\hat{V}- V_{\mathcal{F}}\|_{F}$. Note that
\begin{align*}
&\mathcal{L}(\mathcal{F}; \hat{V})-\mathcal{L}(\mathcal{F}; V_{\mathcal{F}})\leq |\mathcal{L}(\mathcal{F}; \hat{V})-\mathcal{L}(\hat{X}^{*}; \hat{V})|\\
&~~~+(\mathcal{L}(\hat{X}^{*}; \hat{V})-\mathcal{L}(\hat{X}^{*}; V_{\mathcal{F}}))+|\mathcal{L}(\hat{X}^{*}; V_{\mathcal{F}})-\mathcal{L}(\mathcal{F}; V_{\mathcal{F}})|\\
&\leq 2\underset{U\subset\mathscr{R}}{\mathrm{sup}}|\mathcal{L}(\hat{X}^{*};U)-\mathcal{L}(\mathcal{F};U)|.
\end{align*}
Taking $\varrho=\epsilon=\frac{1}{m\sqrt{n}}(\sqrt{K+1}+\frac{K\delta_{\mathrm{max}}}{\lambda_{K}(\tau+\delta_{\mathrm{max}})})err_{n}$, define $\mathrm{P}_{2}(n,K)=\mathrm{P}_{2}(n,\epsilon,\varrho)$ with plug-in values of $\varrho$ and $\epsilon$. To sumarize, we have that with probability at least $1-o(n^{-3})-\mathrm{P}_{2}(n,K)$, the following holds:
\begin{align*}
\|\hat{V}- V_{\mathcal{F}}\|_{F}&\leq (\kappa K^{-1})^{-1}(\mathcal{L}(\mathcal{F};\hat{V})-\mathcal{L}(\mathcal{F}; V_{\mathcal{F}}))\\
&\leq\frac{2K}{\kappa}\underset{U\subset\mathscr{R}}{\mathrm{sup}}|\mathcal{L}(\hat{X}^{*};U)-\mathcal{L}(\mathcal{F};U)|\\
&\leq \frac{2K}{\kappa}(\frac{\|\hat{X}^{*}-X^{*}\|_{F}}{\sqrt{n}}+2\epsilon+\varrho)\\
&\leq \frac{CK}{m\sqrt{n}}(\sqrt{K+1}+\frac{K\delta_{\mathrm{max}}}{\lambda_{K}(\tau+\delta_{\mathrm{max}})})err_{n},
\end{align*}
where $C$ is a constant. And we have $\mathrm{P}(n,K)=1-o(n^{-3})-\mathrm{P}_{2}(n,K)$.
\end{proof}
\subsection{Proof of Lemma \ref{boundY}.}
\begin{proof}
Since
\begin{align*}
\|\hat{Y}-Y\|_{F}&=\|\mathrm{max}(\hat{X}^{*}\hat{V}'(\hat{V}\hat{V}')^{-1},0)-\mathrm{max}(X^{*}V'(VV')^{-1},0)\|_{F}\\
&\leq\|\hat{X}^{*}\hat{V}'(\hat{V}\hat{V}')^{-1}-X^{*}V'(VV')^{-1}\|_{F}\\
&=\|\hat{X}^{*}\hat{V}'(\hat{V}\hat{V}')^{-1}-\hat{X}^{*}V'(VV')^{-1}+\hat{X}^{*}V'(VV')^{-1}-X^{*}V'(VV')^{-1}\|_{F}\\
&\leq\|\hat{X}^{*}\hat{V}'(\hat{V}\hat{V}')^{-1}-\hat{X}^{*}V'(VV')^{-1}\|_{F}+\|\hat{X}^{*}V'(VV')^{-1}-X^{*}V'(VV')^{-1}\|_{F}\\
&\leq\|\hat{X}^{*}\|_{F}\|\hat{V}'(\hat{V}\hat{V}')^{-1}-V'(VV')^{-1}\|_{F}+\|\hat{X}^{*}-X^{*}\|_{F}\|V'(VV')^{-1}\|_{F}\\
&\leq\|\hat{X}^{*}\|_{F}\|\hat{V}'(\hat{V}\hat{V}')^{-1}-V'(VV')^{-1}\|_{F}+\|\hat{X}^{*}-X^{*}\|_{F}\|V\|_{F}\|(VV')^{-1}\|_{F}\\
&=\sqrt{n}\|\hat{V}'(\hat{V}\hat{V}')^{-1}-V'(VV')^{-1}\|_{F}+\sqrt{K}\|\hat{X}^{*}-X^{*}\|_{F}\|(VV')^{-1}\|_{F}\\
&\leq \sqrt{n}\|\hat{V}'(\hat{V}\hat{V}')^{-1}-V'(VV')^{-1}\|_{F}+\frac{\sqrt{K}}{m}(\sqrt{K+1}+\frac{K\delta_{\mathrm{max}}}{\lambda_{K}(\tau+\delta_{\mathrm{max}})})err_{n}\|(VV')^{-1}\|_{F},
\end{align*}
to bound $\|\hat{Y}-Y\|_{F}$, we need to bound $\|\hat{V}'(\hat{V}\hat{V}')^{-1}-V'(VV')^{-1}\|_{F}$ and $\|(VV')^{-1}\|_{F}$. First, we bound $\|(VV')^{-1}\|_{F}$ based on below lemma:
\begin{lem}\label{boundinvH}
For any positive definite symmetric matrix $H$, with rank $r$, we have
\begin{align*}
\|H^{-1}\|_{F}\leq \frac{\sqrt{r}}{\lambda_{\mathrm{min}}(H)},
\end{align*}
where $\lambda_{\mathrm{min}}(H)$ is the smallest eigenvalue of $H$.
\end{lem}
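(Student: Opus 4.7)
The plan is to reduce this to a direct computation on the spectrum of $H$ using the spectral theorem. Since $H$ is symmetric positive definite with rank $r$, I would write $H=U\Lambda U'$, where $U$ has orthonormal columns and $\Lambda=\mathrm{diag}(\lambda_{1}(H),\ldots,\lambda_{r}(H))$ collects the $r$ strictly positive eigenvalues (positive definiteness gives $\lambda_{\min}(H)>0$, so the inverse is well-defined on the range of $H$). Then $H^{-1}=U\Lambda^{-1}U'$, and since the Frobenius norm is invariant under multiplication by matrices with orthonormal columns,
\[
\|H^{-1}\|_{F}^{2}=\|\Lambda^{-1}\|_{F}^{2}=\sum_{i=1}^{r}\frac{1}{\lambda_{i}(H)^{2}}.
\]

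The final step is to bound each summand by $1/\lambda_{\min}(H)^{2}$, giving $\|H^{-1}\|_{F}^{2}\leq r/\lambda_{\min}(H)^{2}$, and take square roots. There is no real obstacle here; the argument is a one-line consequence of the spectral theorem. The only point worth flagging is the interpretation of $H^{-1}$ when $H$ has rank $r$ smaller than its ambient dimension — in the context where this lemma is applied (namely $H=VV'$ with $V\in\mathbb{R}^{K\times(K+1)}$ of full row rank, so $r=K$), $H$ is an honest $K\times K$ invertible matrix, so no pseudoinverse subtlety arises and the bound is used cleanly.
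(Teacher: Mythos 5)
Your proof is correct and amounts to the same argument as the paper's: the paper simply invokes the standard inequality $\|B\|_{F}\leq\sqrt{r}\,\|B\|$ together with $\|H^{-1}\|=1/\lambda_{\mathrm{min}}(H)$, which is exactly what your explicit spectral decomposition $\|H^{-1}\|_{F}^{2}=\sum_{i=1}^{r}\lambda_{i}(H)^{-2}\leq r/\lambda_{\mathrm{min}}(H)^{2}$ establishes. Your remark about the rank-deficient reading of the statement versus the actual application to the invertible $K\times K$ matrix $VV'$ is a fair clarification but does not change the argument.
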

\begin{proof}
Since $\|B\|_{F}\leq \sqrt{r}\|B\|$ for any $r\times r$ matrix $B$, we have $\|H^{-1}\|_{F}\leq \sqrt{r}\|H^{-1}\|=\frac{\sqrt{r}}{\lambda_{\mathrm{min}}(H)}$.
\end{proof}
Since $VV'$ is a positive definite symmetric matrix, we have $\|(VV')^{-1}\|_{F}\leq \frac{\sqrt{K}}{\lambda_{\mathrm{min}}(VV')}$ by Lemma \ref{boundinvH}.

Next we bound $\|\hat{V}'(\hat{V}\hat{V}')^{-1}-V'(VV')^{-1}\|_{F}$,
\begin{align*}
&\|\hat{V}'(\hat{V}\hat{V}')^{-1}-V'(VV')^{-1}\|_{F}\\
&=\|\hat{V}'(\hat{V}\hat{V}')^{-1}-\hat{V}'(VV')^{-1}+\hat{V}'(VV')^{-1}-V'(VV')^{-1}\|_{F}\\
&\leq\|\hat{V}'(\hat{V}\hat{V}')^{-1}-\hat{V}'(VV')^{-1}\|_{F}+\|\hat{V}'(VV')^{-1}-V'(VV')^{-1}\|_{F}\\
&\leq\|\hat{V}\|_{F}\|(\hat{V}\hat{V}')^{-1}-(VV')^{-1}\|_{F}+\|\hat{V}-V\|_{F}\|(VV')^{-1}\|_{F}\\
&=\sqrt{K}\|(\hat{V}\hat{V}')^{-1}-(VV')^{-1}\|_{F}+\|\hat{V}-V\|_{F}\|(VV')^{-1}\|_{F}\\
&\leq\sqrt{K}\|(\hat{V}\hat{V}')^{-1}-(VV')^{-1}\|_{F}+\frac{\sqrt{K}\|\hat{V}-V\|_{F}}{\lambda_{\mathrm{min}}(VV')}\\
&=\sqrt{K}\|(\hat{V}\hat{V}')^{-1}(\hat{V}\hat{V}'-VV')(VV')^{-1}\|_{F}+\frac{\sqrt{K}\|\hat{V}-V\|_{F}}{\lambda_{\mathrm{min}}(VV')}\\
&\leq\sqrt{K}\|(\hat{V}\hat{V}')^{-1}\|_{F}\|\hat{V}\hat{V}'-VV'\|_{F}\|(VV')^{-1}\|_{F}+\frac{\sqrt{K}\|\hat{V}-V\|_{F}}{\lambda_{\mathrm{min}}(VV')}\\
&\leq\frac{K^{3/2}\|\hat{V}\hat{V}'-VV'\|_{F}}{\lambda_{\mathrm{min}}(\hat{V}\hat{V}')\lambda_{\mathrm{min}}(VV')}+\frac{\sqrt{K}\|\hat{V}-V\|_{F}}{\lambda_{\mathrm{min}}(VV')}\\
&=\frac{K^{3/2}\|\hat{V}\hat{V}'-\hat{V}V'+\hat{V}V'-VV'\|_{F}}{\lambda_{\mathrm{min}}(\hat{V}\hat{V}')\lambda_{\mathrm{min}}(VV')}+\frac{\sqrt{K}\|\hat{V}-V\|_{F}}{\lambda_{\mathrm{min}}(VV')}\\
&\leq\frac{K^{3/2}\|\hat{V}\hat{V}'-\hat{V}V'\|_{F}+K^{3/2}\|\hat{V}V'-VV'\|_{F}}{\lambda_{\mathrm{min}}(\hat{V}\hat{V}')\lambda_{\mathrm{min}}(VV')}+\frac{\sqrt{K}\|\hat{V}-V\|_{F}}{\lambda_{\mathrm{min}}(VV')}\\
&\leq\frac{K^{3/2}\|\hat{V}\|_{F}\|\hat{V}-V\|_{F}+K^{3/2}\|\hat{V}-V\|_{F}\|V\|_{F}}{\lambda_{\mathrm{min}}(\hat{V}\hat{V}')\lambda_{\mathrm{min}}(VV')}+\frac{\sqrt{K}\|\hat{V}-V\|_{F}}{\lambda_{\mathrm{min}}(VV')}\\
&\leq\frac{2K^{2}\|\hat{V}-V\|_{F}}{\lambda_{\mathrm{min}}(\hat{V}\hat{V}')\lambda_{\mathrm{min}}(VV')}+\frac{\sqrt{K}\|\hat{V}-V\|_{F}}{\lambda_{\mathrm{min}}(VV')}\\
&=(\frac{2K^{2}}{\lambda_{\mathrm{min}}(\hat{V}\hat{V}')}+\sqrt{K})\frac{\|\hat{V}-V\|_{F}}{\lambda_{\mathrm{min}}(VV')}.
\end{align*}
Next we aim to bound $\lambda_{\mathrm{min}}(\hat{V}\hat{V}')$ by $\lambda_{\mathrm{min}}(VV')$. By Weyl's inequality \cite{weyl1912das}, we have
\begin{align*}
|\lambda_{\mathrm{min}}(\hat{V}\hat{V}')-\lambda_{\mathrm{min}}(VV')|\leq\|\hat{V}\hat{V}'-VV'\|.
\end{align*}
Since spectral norm is always smaller than or equal to the Frobenius norm for any matrix, we have $|\lambda_{\mathrm{min}}(\hat{V}\hat{V}')-\lambda_{\mathrm{min}}(VV')|\leq\|\hat{V}\hat{V}'-VV'\|_{F}$. Since
\begin{align*}
\|\hat{V}\hat{V}'-VV'\|_{F}&=\|\hat{V}\hat{V}'-VV'\|_{F}\\
&=\|\hat{V}\hat{V}'-\hat{V}V'+\hat{V}V'-VV'\|_{F}\\
&\leq\|\hat{V}\hat{V}'-\hat{V}V'\|_{F}+\|\hat{V}V'-VV'\|_{F}\\
&\leq\|\hat{V}\|_{F}\|\hat{V}'-V'\|_{F}+\|\hat{V}-V\|_{F}\|V\|_{F}\\
&=2\sqrt{K}\|\hat{V}-V\|_{F},
\end{align*}
we have
\begin{align*}
\lambda_{\mathrm{min}}(VV')-2\sqrt{K}\|\hat{V}-V\|_{F}\leq\lambda_{\mathrm{min}(\hat{V}\hat{V}')}\leq \lambda_{\mathrm{min}}(VV')+2\sqrt{K}\|\hat{V}-V\|_{F}
\end{align*}

Therefore, the bound of $\|\hat{Y}-Y\|_{F}$ is
\begin{align*}
\|\hat{Y}-Y\|_{F}&\leq\sqrt{n}\|\hat{V}'(\hat{V}\hat{V}')^{-1}-V'(VV')^{-1}\|_{F}+\frac{\sqrt{K}}{m}(\sqrt{K+1}+\frac{K\delta_{\mathrm{max}}}{\lambda_{K}(\tau+\delta_{\mathrm{max}})})err_{n}\|(VV')^{-1}\|_{F}\\
&\leq\sqrt{n}(\frac{2K^{2}}{\lambda_{\mathrm{min}}(\hat{V}\hat{V}')}+\sqrt{K})\frac{\|\hat{V}-V\|_{F}}{\lambda_{\mathrm{min}}(VV')}+\frac{K}{m\lambda_{\mathrm{min}}(VV')}(\sqrt{K+1}+\frac{K\delta_{\mathrm{max}}}{\lambda_{K}(\tau+\delta_{\mathrm{max}})})err_{n}\\
&\leq\sqrt{n}(\frac{2K^{2}}{\lambda_{\mathrm{min}}(VV')-2\sqrt{K}\|\hat{V}-V\|_{F}}+\sqrt{K})\frac{\|\hat{V}-V\|_{F}}{\lambda_{\mathrm{min}}(VV')}\\
&~~~+\frac{K}{m\lambda_{\mathrm{min}}(VV')}(\sqrt{K+1}+\frac{K\delta_{\mathrm{max}}}{\lambda_{K}(\tau+\delta_{\mathrm{max}})})err_{n}\\
&\leq\sqrt{n}(\frac{2K^{2}}{m_{V}-2\sqrt{K}\|\hat{V}-V\|_{F}}+\sqrt{K})\frac{\|\hat{V}-V\|_{F}}{m_{V}}\\
&~~~+\frac{K}{mm_{V}}(\sqrt{K+1}+\frac{K\delta_{\mathrm{max}}}{\lambda_{K}(\tau+\delta_{\mathrm{max}})})err_{n}\\
&\leq(\frac{2K^{2}}{m_{V}-2\sqrt{K}\frac{CK}{m\sqrt{n}}(\sqrt{K+1}+\frac{K\delta_{\mathrm{max}}}{\lambda_{K}(\tau+\delta_{\mathrm{max}})})err_{n}}+\sqrt{K})\frac{CK(\sqrt{K+1}+\frac{K\delta_{\mathrm{max}}}{\lambda_{K}(\tau+\delta_{\mathrm{max}})})err_{n}}{mm_{V}}\\
&~~~+\frac{K}{mm_{V}}(\sqrt{K+1}+\frac{K\delta_{\mathrm{max}}}{\lambda_{K}(\tau+\delta_{\mathrm{max}})})err_{n}\\
&= Err_{n}.
\end{align*}
\end{proof}
\subsection{Proof of Theorem \ref{main}.}
\begin{proof}
The difference between the row-normalized projection coefficients $\Pi$ and $\hat{\Pi}$ can be bounded by the difference between $Y$ and $\hat{Y}$, since
\begin{align*}
\|\hat{\pi}_{i}-\pi_{i}\|_{F}&=\|\frac{\hat{Y}_{i}}{\|\hat{Y}_{i}\|_{F}}-\frac{Y_{i}}{\|Y_{i}\|_{F}}\|_{F}\\
&=\|\frac{\hat{Y}_{i}\|Y_{i}\|_{F}-Y_{i}\|\hat{Y}_{i}\|_{F}}{\|\hat{Y}_{i}\|_{F}\|Y_{i}\|_{F}}\|_{F}\\
&=\|\frac{\hat{Y}_{i}\|Y_{i}\|_{F}-\hat{Y}_{i}\|\hat{Y}_{i}\|_{F}+\hat{Y}_{i}\|\hat{Y}_{i}\|_{F}-Y_{i}\|\hat{Y}_{i}\|_{F}}{\|\hat{Y}_{i}\|_{F}\|Y_{i}\|_{F}}\|_{F}\\
&\leq \frac{\|\hat{Y}_{i}\|Y_{i}\|_{F}-\hat{Y}_{i}\|\hat{Y}_{i}\|_{F}\|_{F}+\|\hat{Y}_{i}\|\hat{Y}_{i}\|_{F}-Y_{i}\|\hat{Y}_{i}\|_{F}\|_{F}}{\|\hat{Y}_{i}\|_{F}\|Y_{i}\|_{F}}\\
&=\frac{\|\hat{Y}_{i}\|_{F}|\|Y_{i}\|_{F}-\|\hat{Y}_{i}\|_{F}|+\|\hat{Y}_{i}\|_{F}\|\hat{Y}_{i}-Y_{i}\|_{F}}{\|\hat{Y}_{i}\|_{F}\|Y_{i}\|_{F}}\\
&=\frac{|\|Y_{i}\|_{F}-\|\hat{Y}_{i}\|_{F}|+\|\hat{Y}_{i}-Y_{i}\|_{F}}{\|Y_{i}\|_{F}}\\
&\leq\frac{2\|\hat{Y}_{i}-Y_{i}\|_{F}}{\|Y_{i}\|_{F}}\\
&\leq \frac{2\|\hat{Y}_{i}-Y_{i}\|_{F}}{\mathrm{min}_{i}\|Y_{i}\|_{F}},
\end{align*}
we have
\begin{align*}
\frac{\|\hat{\Pi}-\Pi\|_{F}}{\sqrt{n}}\leq \frac{2\|\hat{Y}-Y\|_{F}}{\mathrm{min}_{i}\|Y_{i}\|_{F}\sqrt{n}}\leq \frac{2Err_{n}}{\mathrm{min}_{i}\|Y_{i}\|_{F}\sqrt{n}}=\frac{2Err_{n}}{m_{Y}\sqrt{n}}.
\end{align*}
\end{proof}
\end{document}